\documentclass[IEEEtran]{ieeeconf}  

\IEEEoverridecommandlockouts                              

\newtheorem{mydef}{\bf Definition}
\newtheorem{mythm}{\bf Theorem}
\newtheorem{myprob}{\bf Problem}
\newtheorem{mylem}{\bf Lemma}
\newtheorem{mycol}{\bf Corollary}
\newtheorem{mypro}{\bf Proposition}
\newtheorem{myexm}{\bf Example}
\newtheorem{remark}{Remark}
\usepackage{amsmath}
\usepackage{mathrsfs}
\usepackage{amssymb}
\usepackage{algorithmic}

\usepackage{graphicx}
\usepackage[linesnumbered,ruled,vlined]{algorithm2e}

\def \R{\mathbb{R}_{\max}}

\title{\LARGE \bf
On the Computation   of Backward Reachable Sets for \\
Max-Plus Linear Systems with Disturbances}
\author{Yuda Li   and Xiang Yin%
\thanks{This work was supported by the National Science and Technology Major Project (2025ZD1600700) and the National Natural Science Foundation of China (62573291,62533017,62173226).}
	\thanks{Yuda Li and Xiang Yin are with School of Automation and Intelligent Sensing, Shanghai Jiao Tong University, Shanghai 200240, China.
	{\tt\small \{yuda.li,   yinxiang\}@sjtu.edu.cn}.}
}
\usepackage{color}

\begin{document}

\maketitle
\thispagestyle{empty}
\pagestyle{empty}

\begin{abstract}
This paper investigates one-step backward reachability for uncertain max-plus linear systems with additive disturbances. Given a target set, the problem is to compute the set of states from which there exists an admissible control input such that, for all admissible disturbances, the successor state remains in the target set. This problem is closely related to safety analysis and is challenging due to the high computational complexity of existing approaches.
To address this issue, we develop a computational framework based on tropical polyhedra. We assume that the target set, the control set, and the disturbance set are all represented as tropical polyhedra, and study the structural properties of the associated backward operators. In particular, we show that these operators preserve the tropical-polyhedral structure, which enables the constructive computation of reachable sets within the same framework. The proposed approach provides an effective geometric and algebraic tool for reachability analysis of uncertain max-plus linear systems. Illustrative examples are included to demonstrate the proposed method. 
\end{abstract}


\section{Introduction}
Max-plus linear systems (MPLSs) is an important class of discrete-event systems formulated over the max-plus algebra  \cite{hardouin2018control,de2020analysis}. This algebraic setting is particularly well suited for describing synchronization and delay effects in timed event-driven systems. MPLSs are closely connected with timed event graphs \cite{amari2011max,he2021performance}, which form a subclass of Petri nets characterized. Because of their effectiveness in representing event timings under precedence relations, max-plus models have been widely used in applications such as scheduling \cite{he2016cycle}, manufacturing \cite{hardouin2011towards,imaev2008hierarchial,van2020model}, and transportation systems \cite{kistosil2018generalized,kersbergen2016towards,outafraout2020control}.

In practical applications, however, max-plus systems are often affected by uncertainty arising from disturbances, modeling errors, or parameter variations. This makes reachability analysis and controller synthesis significantly more challenging. Existing approaches for uncertain MPLSs can be broadly divided into two categories. The first models uncertainty probabilistically. For example, \cite{soudjani2016formal} studied formal verification for stochastic max-plus linear systems by treating the system matrix as a random variable, while \cite{xu2018model} investigated model predictive control for max-plus linear systems with uncertain parameters endowed with probability distributions. These approaches provide a probabilistic characterization of uncertainty and enable the use of stochastic verification and control techniques.

A second line of work models uncertainty through deterministic ambiguity sets rather than probability distributions. In this direction, \cite{candido2018conditional} studied reachability analysis for uncertain max-plus linear systems subject to bounded noise, disturbances, and modeling errors. There, the uncertainty set is described in the interval max-plus framework, and reachability is computed by transforming the system into a piecewise affine representation and then applying difference-bound matrices (DBMs). However, the associated computations may become expensive as the system dimension and uncertainty structure grow.

In this paper, we study the backward reachability problem for   max-plus linear systems with disturbances. Given a target set, the objective is to compute the set of  states from which there exists an admissible control input such that, for all admissible disturbances, the successor state remains in the target set. This problem is closely connected to safety analysis and controlled invariance, since backward reachable sets provide the basic ingredient for determining safe sets and synthesizing safety-preserving controllers. While such problems can in principle be addressed using DBM-based techniques, the resulting computations may suffer from high complexity. Moreover, compared with the disturbance-free case studied in our recent work \cite{li2026computation}, the presence of disturbances introduces a substantial new difficulty through the universal quantification over the uncertainty set.

Our approach is motivated by recent developments in tropical convexity, in particular the theory of tropical polyhedra  \cite{allamigeon2013computing,espindola2025set}. We model the control and disturbance sets, as well as the target set, within a tropical polyhedral framework, which provides a flexible and expressive representation of uncertainty in the max-plus setting. Based on this representation, we develop a computational procedure for backward reachability analysis of uncertain max-plus linear systems. The proposed framework accommodates a broader class of uncertainty structures than interval-based descriptions, while preserving a compact geometric representation amenable to computation. By exploiting structural properties of tropical polyhedra, we show that backward reachable sets can still be characterized and computed within the same framework. In this way, the paper offers a new geometric perspective on the analysis of uncertain max-plus systems and provides an effective tool for safety verification and control synthesis.

\section{Preliminary}
In this section, we introduce the basic notation and preliminary results used throughout the paper.
\subsection{The Max-Plus Algebra $\mathbb{R}_{\max}$}
The max plus algebra $\mathbb{R}_{\max}$ is the set of real numbers together with an infinite point: $\mathbb{R} \cup \{-\infty\}$, together with the binary operations
$\oplus:\R\times\R\to\R$ and
$\otimes:\R\times\R\to\R$
defined by
$a\oplus b=\max(a,b)$ and $a\otimes b=a+b$.
For any $a\in\mathcal{D}$, we define
$(-\infty)\oplus a=a\oplus(-\infty)=a$ and
$(-\infty)\otimes a=a\otimes(-\infty)=-\infty$.
For simplicity, we write $\varepsilon$ for $-\infty$ throughout the paper.
On $\mathbb{R}_{\max}$, we consider the metric
$d(x,y)=|\exp(x)-\exp(y)|$,
with the convention that $\exp(\varepsilon)=0$.
Open and closed sets are understood with respect to the topology induced by this metric.
In the remainder of the paper, we focus on the max-plus algebra $\mathbb{R}_{\max}$, which we simply denote by $\R$.

\subsection{Semimodules, Sub-Semimodules, and Convex Cones}
The Cartesian product of $n$ copies of $\R$, denoted by $\R^n$, is naturally equipped with the vector addition
$\oplus:\R^n\times\R^n\to\R^n$ defined componentwise by
$(a_1,\dots,a_n)\oplus(b_1,\dots,b_n)
=
(a_1\oplus b_1,\dots,a_n\oplus b_n)$,
and the scalar multiplication
$\cdot:\R\times\R^n\to\R^n$ defined by
$a\cdot(b_1,\dots,b_n)
=
(a\otimes b_1,\dots,a\otimes b_n)$.
The triple $(\R^n,\oplus,\cdot)$ is called a max-plus $\R$-semimodule, or simply a semimodule.
This space is also endowed with the natural partial order
$x\le y
\Leftrightarrow
x_i\le y_i,\ \forall\,1\le i\le n$.
The element $\varepsilon_n=(\varepsilon,\dots,\varepsilon)\in\R^n$ is the minimal element.

Any matrix $A\in\R^{n\times m}$ defines a linear map
$\phi_A:\R^m\to\R^n$ by
$\phi_A(x)_i=\bigoplus_{j=1}^m A_{ij}\otimes x_j$,
where $ x=(x_1,\dots,x_m)\in\R^m$.
In the rest of the paper, we simply write $\phi_A(x)$ as $A\otimes x$.
Similarly, for $a,b\in\R^n$, their scalar product is defined by
$
(a|b)=\bigoplus_{i=1}^n a_i\otimes b_i$. 
A subsemimodule is a subset $\mathcal{S}\subseteq\R^n$ such that
$\forall\,x,y\in\mathcal{S},\ \forall\,\mu,\lambda\in\mathcal{D}
\mu\cdot x\oplus\lambda\cdot y\in\mathcal{S}$. 
The topology on $\R^n$ is taken to be the product topology induced by the metric on $\R$.
A convex cone over $\mathcal{D}$ is a subset $\mathcal{C}\subseteq\R^n$ satisfying
$\forall x,y \in \mathcal{D},\ \forall \lambda_1,\lambda_2 \in R,\ [\lambda_1\oplus \lambda_2 = 0] \Rightarrow \lambda_1\cdot x\oplus\lambda_2\cdot y \in \mathcal{D}$.
For two subsets $A,B\subseteq\R^n$, we define their max-plus sum by
$
A\oplus B=\{a\oplus b\mid a\in A,\ b\in B\}$.

\subsection{Tropical Cones, Half-Spaces and Polyhedra}
A tropical half-space $\mathscr{H}$ is a subset of $\R^n$ of the form 
$\mathscr{H} = \{x \in \R^n \mid (a|x) \le (b|x)\}$, where $a, b \in \R^n$. 
A tropical cone is an intersection of finitely many half-spaces, 
or equivalently, a set of the form 
\begin{equation}
    \mathcal{C} = \{x \in \R^n \mid A \otimes x \le B \otimes x\}
\end{equation}
which we denote for simplicity as $\mathcal{C} = \langle A, B \rangle$.
Similarly, a tropical polyhedra is a set of the form 
\begin{equation}
   \mathcal{P} = \{x 
\in \mathbb{R}_{\max}\mid A \otimes x \oplus c \le B \otimes x \oplus d\},
\end{equation}
which we denote for simplicity as $\mathcal{P} = \langle (A, c), (B, d) \rangle$. 
We call such representation an outer representation, or $\mathcal{M}$-form of a tropical cone (or tropical polyhedra).
For simplicity, we denote $\langle A,B \rangle^s = \langle  A,B\rangle \cap \langle  B,A\rangle$ (resp $\langle (A,c),(B,d) \rangle^s = \langle  (A,c),(B,d)\rangle \cap \langle  (B,d),(A,c)\rangle$), which are also tropical polyhedra. It has been proved by \cite{butkovivc2010max} that every tropical cone $\mathcal{C}$ can be generated by a finite set such that
\begin{equation}\label{V form of cone}
        \exists v_1, \dots, v_N \in \mathcal{C},\ \mathcal{C} = \left\{\left.\bigoplus_{i=1}^N \lambda_i \cdot v_i \right| \lambda_1, \dots, \lambda_N \in \R\right\}.
\end{equation}
We call $\{v_1, \dots, v_N\}$ the inner representation of $\mathcal{C}$, or its $\mathcal{V}$-form, and denote $\mathcal{C}$ as $\mathrm{Span}(\{v_1, \dots, v_N\})$. 
Tropical cones and tropical half-spaces are both subsemimodules, but the reciprocate is not true, since a subsemimodule may not be finitely generated.

A \textbf{bounded tropical polyhedron} is a finitely generated tropical convex cone of the following general form 
\[\mathcal{B} = \left\{\bigoplus_{i=1}^N \lambda_i\cdot e_i\left| \bigoplus_{i=1}^N\lambda_i=0\right\}\right.,\]
where $N \in \mathbb{Z}_{>0}$ and $e_i \in \R^n$ are vectors.  For simplicity, we write $\mathcal{B} = \mathrm{Conv}(e_1,\dots ,e_N)$.
It was proved in \cite{gaubert2011minimal} that $\mathcal{P}$ can be decomposed as
$\mathcal{P}=\mathcal{R}\oplus\mathcal{B}$, 
where $\mathcal{R}$ is a tropical cone and $\mathcal{B}$ is a bounded tropical polyhedron.
The tropical cone $\mathcal{R}$ is called the recession cone, and it is unique for this decomposition.

For a vector $v\in\R^n$, we denote by $v_{[ k: m]}$ the subvector consisting of the coordinates with indices ranging from $k$ to $m$.
For a matrix $A\in \R^{k\times m}$, we denote by $A_{[k_1:k_2][m_1:m_2]}$ the submatrix $(A_{i,j})_{k_1\le i\le k_2, m_1\le j\le m_2}$. For simplicity, if $k_1 = 1,k_2 = k$ or $m_1 = 1, m_2 = m$, we denote respectively $A_{\bullet,[m_1:m_2]}$ and $A_{[k_1:k_2],\bullet}$ (or simplier $A_{[k_1:k_2]}$) . If $k_1=k_2$ or $m_1 = m_2$, we simply denote as $A_{k_1,[m_1:m_2]}$ or $A_{[k_1:k_2],m_1}$. 

Next we discuss some useful properties about tropical cones and tropical polyhedra, which will be used throughout this paper. First,  both tropical cones and bounded tropical polyhedras are tropical polyhedras, since they are respectively a special case with $\mathcal{B} = \{\varepsilon_n\}$ or with $\mathcal{R} = \{\varepsilon_n\}$. 
The following lemma characterize the relation between tropical polyhedra and 
\begin{mylem}
    Let $\mathcal{P} \subseteq \R^n$ be a tropical polyhedra, then there exists $\mathcal{C} \subseteq \R^{n+1}$ a tropical cone such that 
    \[\mathcal{P} = \{x \in \R^n\mid [0,x^T]^T \in \mathcal{C}\}.
    \]
    Reciprocally, all sets of this form is a tropical polyhedra.
\end{mylem}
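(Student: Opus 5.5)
The plan is to use the $\mathcal{M}$-form directly, homogenizing the constant terms with an extra coordinate. This is the standard homogenization trick. Both directions reduce to rewriting a single max-plus inequality. The only subtle point concerns $\varepsilon$ entries.

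\emph{Direct sense.} Write $\mathcal{P}=\langle (A,c),(B,d)\rangle$ with $A,B\in\R^{p\times n}$ and $c,d\in\R^p$. Define the block matrices $\tilde A=[c\ A]$ and $\tilde B=[d\ B]$ in $\R^{p\times(n+1)}$, and set $\mathcal{C}=\langle \tilde A,\tilde B\rangle\subseteq\R^{n+1}$. This is a tropical cone by definition. For $y=[0,x^T]^T$ we have $\tilde A\otimes y = c\otimes 0\oplus A\otimes x = A\otimes x\oplus c$, and likewise $\tilde B\otimes y = B\otimes x\oplus d$. Hence $y\in\mathcal{C}$ if and only if $x\in\mathcal{P}$, which gives the claimed identity.

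\emph{Converse.} Let $\mathcal{C}=\langle \tilde A,\tilde B\rangle$ be any tropical cone in $\R^{n+1}$. Split its matrices as $\tilde A=[c\ A]$ and $\tilde B=[d\ B]$, with $c=\tilde A_{\bullet,1}$, $A=\tilde A_{\bullet,[2:n+1]}$, and similarly for $\tilde B$. Evaluating at $[0,x^T]^T$ gives exactly the same computation as in the direct sense. Therefore $\{x\mid [0,x^T]^T\in\mathcal{C}\}=\langle (A,c),(B,d)\rangle$, which is a tropical polyhedron by definition.

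\emph{Potential obstacle.} One might worry that the statement requires $\mathcal{C}$ to be described by its $\mathcal{V}$-form, or that a cone could be given by generators rather than inequalities. Since the paper defines tropical cones as finite intersections of half-spaces, the $\mathcal{M}$-form suffices and no use of the result from \cite{butkovivc2010max} is needed. The only check is that $0$ is the max-plus unit, so $c_i\otimes 0=c_i$ even when $c_i=\varepsilon$. That makes the proof essentially a one-line identity in both directions.
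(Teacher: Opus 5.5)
Your proof is correct, but it takes a different route from the paper. You homogenize the outer ($\mathcal{M}$-form) description by moving the constant vectors $c,d$ into a new first column, so that $[c\ A]\otimes[0,x^T]^T=A\otimes x\oplus c$. Since the paper \emph{defines} tropical cones and polyhedra by such inequalities, both directions become purely definitional.

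The paper instead homogenizes the inner ($\mathcal{V}$-form) description. It writes $\mathcal{P}=\mathrm{Span}(v_1,\dots,v_N)\oplus\mathrm{Conv}(e_1,\dots,e_M)$ and takes $\mathcal{C}=\mathrm{Span}((\varepsilon,v_i),(0,e_j))$. Conversely, it splits a generating set $V$ of $\mathcal{C}$ into two groups. Generators with first coordinate $\varepsilon$ yield the recession cone $\mathrm{Span}(W)$. Generators with finite first coordinate, rescaled to $0$, yield the bounded part $\mathrm{Conv}(U)$.

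Measured against the paper's own inequality-based definitions, that argument tacitly needs the tropical Minkowski--Weyl machinery at both ends to move between representations. This means the decomposition $\mathcal{P}=\mathcal{R}\oplus\mathcal{B}$ from \cite{gaubert2011minimal} and finite generation of cones from \cite{butkovivc2010max}. Your argument avoids this entirely.

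What the paper's version buys is an explicit generator-level dictionary: recession directions $\leftrightarrow$ generators with first coordinate $\varepsilon$, and vertices of the bounded part $\leftrightarrow$ generators with first coordinate $0$. This is what the later algorithmic steps use:
\begin{itemize}
\item the generating matrices $M$ and $R$ with first row in $\{0,\varepsilon\}$ for $A^{-1}$ and $\gamma_\mathcal{U}$;
\item the cone $\mathcal{P}'=\mathrm{Span}((0,e_i))$ in the treatment of $\phi_\mathcal{W}$;
\item the conversion back to $\mathrm{Conv}$ form in the case study.
\end{itemize}
Your inequality-level dictionary is the one used in the $\phi_\mathcal{W}$ section itself, where $Z$ is written as $\langle C,D\rangle\cap\R^n$ with $C,D\in\R^{q\times(n+1)}$.
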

\begin{proof}
For simplicity, we identify the tuple $(x_1,\dots,x_n)$ with the column vector $[x_1,\dots,x_n]^T$, and we write $[0,x^T]^T$ simply as $(0,x)$.
Note that $\mathcal{P}$ is precisely the intersection of the hyperplane $x_1=0$ with the tropical cone $\mathcal{C}$.
Thus, by a slight abuse of notation, we write $\mathcal{P} = \mathcal{C}\cap \R^n$. Given a tropical polyhedra with the following form 
\[\mathcal{P} = \mathrm{Span}(v_1,\dots,v_N) \oplus \mathrm{Conv}(e_1,\dots,e_M)),\]
we can construct the corresponding tropical cone 
\[\mathcal{C} = \mathrm{Span}((\varepsilon,v_1),\dots,(\varepsilon,v_N),(0,e_1),\dots,(0,e_M)) \subseteq \R^{n+1}\]
then $\mathcal{C}$ is such that $\mathcal{P} = \mathcal{C}\cap \R^n$. Conversely, given a tropical cone $\mathcal{C}\subseteq\R^{n+1}$ generated by a set $V$, define
\[
U
=
\{(-v_1)\cdot v_{[2:n+1]}\in\R^n \mid v\in V,\ v_1\neq\varepsilon\},
\]
and
\[
W
=
\{v_{[2:n+1]}\in\R^n \mid v\in V,\ v_1=\varepsilon\}.
\]
Then one can verify that
\[
\mathcal{C}\cap\R^n
=
\mathrm{Span}(W)\oplus\mathrm{Conv}(U).
\]    
\end{proof}

\section{Problem Formulation}\label{sec: problem formulation}

In this section, we formulate the backward reachability problem for max-plus linear systems subject to disturbances. We first introduce the system model and the associated backward reachability operator, and then decompose this operator into several elementary set-valued operators that will be studied in the sequel.

We consider the following max-plus linear  system (MPLS) with disturbance over $\R$:
\begin{equation}\label{system disturbance}
    x_k = A \otimes x_{k-1} \oplus B \otimes u_k \oplus C \otimes w_k,
\end{equation}
where $x_k \in \R^n$ is the system state, $u_k \in \mathcal{U} \subseteq \R^m$ is the control input, and $w_k \in \mathcal{W} \subseteq \R^q$ is the disturbance. The matrices $A \in \R^{n\times n}$, $B \in \R^{n\times m}$, and $C \in \R^{n\times q}$ are the system matrices.

Given a target subset $E \subseteq \R^n$, we define the (one-step) backward reachable set of $E$ as
\[
    \Upsilon(E)
    =
    \left\{
    x \in \R^n \ \middle|\
    \begin{aligned}
        &\exists u \in \mathcal{U},\ \forall w \in \mathcal{W}\text{ s.t. }\\
        &A \otimes x \oplus B \otimes u \oplus C \otimes w \in E
    \end{aligned}
    \right\}.
\]
In other words, $\Upsilon(E)$ consists of all states from which one can choose a control input such that, for every admissible disturbance, the successor state remains in $E$ after one step.

To better understand the structure of the backward reachability operator $\Upsilon$, we next introduce three elementary set-valued operators that capture, respectively, the effect of the system dynamics, the universal quantification over disturbances, and the existential quantification over control inputs.
\begin{itemize}
    \item 
    \textbf{Inverse operator}: 
    $A^{-1}: 2^{\R^n} \to 2^{\R^n}$ such that 
    \[
    E \mapsto \{x \in \R^n \mid A \otimes x \in E\}.
    \]
    \item 
    \textbf{Universal inverse operator} (w.r.t.\ the disturbance set $\mathcal{W}$):  
    $ \phi_{\mathcal{W}}: 2^{\R^n} \to 2^{\R^n}$ such that 
    \[
    E \mapsto \{x \in \R^n \mid \forall w \in \mathcal{W},\ x \oplus C \otimes w \in E\}.
    \]
    \item 
    \textbf{Existential inverse operator}  (w.r.t\ the control set $\mathcal{U}$):  
    $\gamma_{\mathcal{U}}: 2^{\R^n} \to 2^{\R^n}$ such that 
    \[ 
    E \mapsto \{x \in \R^n \mid \exists u \in \mathcal{U},\ x \oplus B \otimes u \in E\}.
    \] 
\end{itemize}

With these definitions, the backward reachability operator can be decomposed as
\[
    \Upsilon = A^{-1} \circ \gamma_{\mathcal{U}} \circ \phi_{\mathcal{W}}.
\]
This decomposition separates the contributions of the disturbance, the control action, and the autonomous system dynamics, and will serve as the basis for the computational developments in the subsequent sections.

We now state the main problem considered in this paper.

\begin{myprob}[\bf Backward Reachability
Problem]
Let $\mathscr{S}$ be the MPLS   with disturbance defined by \eqref{system disturbance}. Assume that the control set $\mathcal{U}\subseteq \R^m$, the disturbance set $\mathcal{W}\subseteq \R^q$, and the target set $S\subseteq \R^n$ are tropical polyhedra.
Given a tropical polyhedron $E\subseteq \R^n$, compute the (one-step) backward reachable set $\Upsilon(E)$ of $E$ with respect to $\mathscr{S}$.
\end{myprob}

Compared with classical linear systems, the backward reachability problem considered here is substantially more involved. In the max-plus framework, the operations defining the system evolution may transform a set in a way that changes its underlying geometric structure. In particular, even when the initial set is a tropical polyhedron, it is not obvious a priori whether the successive application of inverse images, universal constraints with respect to disturbances, and existential projections with respect to controls still yields a tropical polyhedron. This motivates the main objective of this paper: \textbf{to characterize conditions under which the operator $\Upsilon$ preserves the tropical polyhedral structure, and to develop effective procedures for computing $\Upsilon(E)$ when $E$, $\mathcal{U}$, and $\mathcal{W}$ are tropical polyhedra.} 

The remainder of this paper addresses the solution to Problem~1 and is organized as follows. Section~\ref{sec: problem solution} derives computational procedures for the operators $A^{-1}$ and $\gamma_{\square}$, which are based on similar techniques. In contrast, the universal inverse operator $\phi_{\square}$ is more challenging and is therefore treated separately. Section~\ref{sec: computation for phi} analyzes its structural properties, and Section~\ref{sec: computation phi} develops a computational method based on these results.

\section{Computation for the operators $A^{-1}$ and $\gamma_\square$}\label{sec: problem solution}
In the following, we explain how to compute the backward reachable set associated with the inverse operator $A^{-1}$ and the existential inverse operator $\gamma_\square$ defined in Section \ref{sec: problem formulation}. 
We prove that the tropical polyhedra structure 
is preserved under both operators and derive a method to determine generating vectors for the image of a tropical polyhedron under each of these operators. 

We begin with the following lemma, which is essential for the computation of these operators.
\begin{mylem}\label{projection of convex cone}
    Let $Z$ be a tropical cone, then the projection on its first $r$ coordinates ($r\le n$):
    \[p_r(Z) = \{z\in \R^r\mid \exists v \in \R^{n-r},\ (z,v) \in Z \}\]
     is still a tropical cone.
\end{mylem}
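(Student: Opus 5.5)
The plan is to work with the inner ($\mathcal{V}$-form) representation of $Z$ and to exploit the fact that the projection $p_r$ is a max-plus linear map. By the finite generation result of \cite{butkovivc2010max} recalled in \eqref{V form of cone}, there are vectors $v_1,\dots,v_N\in Z\subseteq\R^n$ with $Z=\mathrm{Span}(\{v_1,\dots,v_N\})$. The map $p_r:\R^n\to\R^r$ is $x\mapsto P\otimes x$, where $P=[I_r\ \ \varepsilon]\in\R^{r\times n}$ and $I_r$ is the max-plus identity (zeros on the diagonal, $\varepsilon$ elsewhere). Hence $p_r$ commutes with $\oplus$ and with scalar multiplication.

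First I will show that
\[
p_r(Z)=\mathrm{Span}(\{p_r(v_1),\dots,p_r(v_N)\}).
\]
For ``$\subseteq$'', take $z\in p_r(Z)$, so $(z,v)=\bigoplus_i\lambda_i\cdot v_i$ for some $v$ and scalars $\lambda_i$. Reading off the first $r$ coordinates gives $z=\bigoplus_i\lambda_i\cdot p_r(v_i)$. For ``$\supseteq$'', any $\bigoplus_i\lambda_i\cdot p_r(v_i)$ equals $p_r\bigl(\bigoplus_i\lambda_i\cdot v_i\bigr)$, and $\bigoplus_i\lambda_i\cdot v_i\in Z$.

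Thus $p_r(Z)$ is a finitely generated subsemimodule of $\R^r$. The remaining task is to conclude that it is a tropical cone in the paper's sense, i.e.\ an intersection of finitely many half-spaces $\langle A,B\rangle$. For this I will invoke the tropical Minkowski--Weyl theorem: finitely generated subsemimodules of $\R^r$ coincide with finite intersections of tropical half-spaces. This is the converse of the statement quoted from \cite{butkovivc2010max}; see also Gaubert--Katz and \cite{allamigeon2013computing}. Constructively, the half-spaces can be obtained from the generators by computing extreme points of the polar cone, as in the tropical double description method of \cite{allamigeon2013computing}.

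The main obstacle is this last step. The paper defines tropical cones via the $\mathcal{M}$-form but has so far only stated one direction of the equivalence, so the argument must rely on the converse. If a self-contained argument is preferred, I would prove the converse by taking all half-spaces $(a|x)\le(b|x)$ that contain every generator $p_r(v_i)$. I would then show that the finitely many ``extremal'' such pairs, namely those arising from the combinatorial types of the finitely many generators, already cut out the span, so that the intersection is finite. Everything else is routine linearity.
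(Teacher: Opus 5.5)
Your proposal is correct and takes the same approach as the paper: you restrict a finite generating family of $Z$ to its first $r$ coordinates and observe that $p_r(Z)$ is the span of these restrictions. You also state explicitly the converse Minkowski--Weyl step (a finitely generated cone is a finite intersection of half-spaces), which the paper's one-line proof uses without comment. Citing that result is sufficient, so you do not need your vaguer self-contained sketch.
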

\begin{proof}
Let $v_1,\dots,v_k$ be a family of generating vectors for $\mathcal{C}$. Denote by $v_1',\dots,v_k'$ the vectors obtained by restricting $v_1,\dots,v_k$ to their first $r$ coordinates. Then the projection of $Z$ onto the first $r$ coordinates is the tropical cone in $\R^{r}$ generated by $v_1',\dots,v_k'$.
\end{proof}
\subsection{Computation for the image of $A^{-1}$}\label{subsec: A-1}
In this section, we illustrate the procedure for computing the backward propagation set $A^{-1}(Z)$, where $A \in \R^{n \times n}$ is a matrix and $Z \subseteq \R^n$ is a tropical polyhedron of the form $\mathcal{C} \cap \R^n$. Here, $\mathcal{C} \subseteq \R^{n+1}$ is a tropical cone generated by a matrix $M \in \R^{(n+1)\times q}$.

Without loss of generality, we assume that all coefficients in the first row of $M$ are either $\varepsilon$ or $0$. By definition, we have
\[
A^{-1}(Z) = \left\{x \in \R^n\left|
\begin{aligned}
    &\exists u \in \R^q, A\otimes x = M_{[2:n+1],\bullet}\otimes u \\
    &M_{1,\bullet}\otimes u = 0
\end{aligned}
\right\}\right.
\]

Note that $A^{-1}(Z)$ can be equivalently rewritten as
\[
A^{-1}(Z) = \left\{x \in \R^n\left|
\begin{aligned}
    &\exists u \in \R^q,\ z = (0,x,u)\\
    &\mathcal{M}_1^{A,M}\otimes z = \mathcal{M}_1^{A,M}\otimes z
\end{aligned}
\right\}\right.
\]
where the matrices $\mathcal{M}_1^{A,M}$ and $\mathcal{M}_2^{A,M}$ are defined by
\[
\mathcal{M}_1^{A,M} =
\left[
\begin{array}{c|c|c}
    0 & \mathcal{E}^{1\times n} & \mathcal{E}_{1\times q}\\ \hline
    \mathcal{E}_{n\times 1} & A & \mathcal{E}_{n\times q}
\end{array}
\right]
\]
\[
\mathcal{M}_2^{A,M} =
\left[
\begin{array}{c|c|c}
    \varepsilon & \mathcal{E}^{1\times n} & \mathcal{M}_{1,\bullet}\\ \hline
    \mathcal{E}_{n\times 1} & \mathcal{E}_{n\times n} & \mathcal{M}_{[2:n+1],\bullet}
\end{array}
\right]
\]

Observe that the vector $(0,x,u)$ ranges precisely over the tropical polyhedron $\left\langle \mathcal{M}_1^{A,M}, \mathcal{M}_2^{A,M} \right\rangle$. Consequently, we obtain
\[
A^{-1}(Z) = \left\{x \in \R^n\left| (0,x) \in p_{n+1}\left(\left\langle \mathcal{M}_1^{A,M}, \mathcal{M}_2^{A,M} \right\rangle^s\right)\right\}\right.
\]
\[
= p_{n+1}\left(\left\langle \mathcal{M}_1^{A,M}, \mathcal{M}_2^{A,M} \right\rangle^s\right) \cap \R^n
\]

This formulation provides an outer description of the set $A^{-1}(Z)$, and the above construction leads to the following corollary.

\begin{mycol}
Let $A \in \R^{n \times n}$ be a matrix, and let $Z \subseteq \R^n$ be a tropical polyhedron. Then the set $A^{-1}(Z)$ is also a tropical polyhedron.
\end{mycol}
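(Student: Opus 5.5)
The plan is to realize $A^{-1}(Z)$ as the slice at first coordinate $0$ of a projection of a tropical cone in a lifted space. The first lemma of the paper then turns that slice back into a tropical polyhedron.

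First, by the first lemma of the paper, write $Z=\mathcal{C}\cap\R^n$, where $\mathcal{C}\subseteq\R^{n+1}$ is a tropical cone generated by the columns of a matrix $M\in\R^{(n+1)\times q}$. Scaling each generator by $-M_{1j}$ whenever $M_{1j}\neq\varepsilon$ does not change $\mathcal{C}$. So we may assume every entry of the first row of $M$ lies in $\{\varepsilon,0\}$.

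Next, introduce the lifted variable $z=(t,x,u)\in\R^{1+n+q}$ and the set
\[
K=\{(t,x,u)\mid (t,A\otimes x)=M\otimes u\}.
\]
With $\mathcal{M}_1^{A,M}$ acting as $(t,x,u)\mapsto (t,A\otimes x)$ and $\mathcal{M}_2^{A,M}$ acting as $(t,x,u)\mapsto M\otimes u$, we have $K=\langle \mathcal{M}_1^{A,M},\mathcal{M}_2^{A,M}\rangle^s$. This is an intersection of finitely many tropical half-spaces, hence a tropical cone.

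Then check the two inclusions. If $x\in A^{-1}(Z)$, then $(0,A\otimes x)\in\mathcal{C}$, so some $u$ satisfies $M\otimes u=(0,A\otimes x)$; hence $(0,x,u)\in K$ and $(0,x)\in p_{n+1}(K)$. Conversely, if $(0,x,u)\in K$, then $(0,A\otimes x)\in\mathcal{C}$, which means $A\otimes x\in Z$. Therefore $A^{-1}(Z)=p_{n+1}(K)\cap\R^n$.

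By Lemma~\ref{projection of convex cone}, $p_{n+1}(K)$ is a tropical cone in $\R^{n+1}$. Applying the converse direction of the first lemma, its slice $p_{n+1}(K)\cap\R^n$ is a tropical polyhedron, which proves the claim.

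The main obstacle is the step asserting that $K$ is a tropical cone, i.e.\ finitely generated. The projection lemma is proved through generators, so it needs $K$ given in $\mathcal{V}$-form. This follows from the cited result of \cite{butkovivc2010max} that every intersection of finitely many tropical half-spaces is finitely generated. I would invoke it explicitly, noting that the equality constraint is expressed as the two inequalities $\mathcal{M}_1\otimes z\le\mathcal{M}_2\otimes z$ and $\mathcal{M}_2\otimes z\le\mathcal{M}_1\otimes z$. Each row of these is a half-space of the form $(a|z)\le(b|z)$.
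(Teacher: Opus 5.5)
Your proposal is correct and follows essentially the same route as the paper. Like the paper, you lift to $(t,x,u)$, encode $(t,A\otimes x)=M\otimes u$ as $\langle \mathcal{M}_1^{A,M},\mathcal{M}_2^{A,M}\rangle^s$, project onto the first $n+1$ coordinates via Lemma~\ref{projection of convex cone}, and slice at $t=0$; your explicit appeal to finite generation of $K$, needed because that lemma is proved through generators, spells out a step the paper leaves implicit.
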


\subsection{Computation for $\gamma_\square$}
In this section, we illustrate the computation of the backward reachable set associated with the operator $\gamma_\square$. Let $A \in \R^{n\times n}$ be a matrix, and let $\mathcal{U} = \mathcal{G}\cap \R^n$ be a tropical polyhedron, where $\mathcal{G}$ is generated by a matrix $R \in \R^{(n+1)\times r}$ whose first-row coefficients are either $0$ or $\varepsilon$. Let $Z$ be a tropical polyhedron of the same form as defined in Section~\ref{subsec: A-1}. Our goal is to compute the set $\gamma_\mathcal{U}(Z)$. By definition, we have
\begin{equation}\nonumber
    \gamma_\mathcal{U}(Z) = \left\{x \in \R^m\left| 
\begin{aligned}
    & \exists y \in \R^r,\ \exists z \in \R^q, \\
    & x \oplus A \otimes R_{[2:n+1],\bullet} \otimes y = \\
    &\qquad \qquad M_{[2:n+1],\bullet} \otimes z,\\
    & R_{1,\bullet}\otimes y = 0,\quad M_{1,\bullet} \otimes z = 0
\end{aligned}
\right.\right\}.
\end{equation}

Similarly to Section~\ref{subsec: A-1}, the set $\gamma_\mathcal{U}(Z)$ can be rewritten as
\[
\gamma_\mathcal{U}(Z) = \left\{x \in \R^m \left|
\begin{aligned}
    &\exists (y,z) \in \R^r \times \R^q\\
    &\exists t=(0,x,y,z), \\
    &\mathcal{M}_1^{A,R,M}\otimes t = \mathcal{M}_2^{A,R,M} \otimes t
\end{aligned}
\right.\right\}.
\]

The matrices $\mathcal{M}_1^{A,R,M}$ and $\mathcal{M}_2^{A,R,M}$ are defined as
\[
\mathcal{M}_1^{A,R,M} = \left[
\begin{array}{c|c|c|c}
    \varepsilon & \mathrm{Id}_m & A\otimes R_{[2:n+1],\bullet} & \mathcal{E}_{m\times q} \\ \hline
    \varepsilon & \mathcal{E}_{1\times m} & R_{1,\bullet} & \mathcal{E}_{1\times q} \\ \hline
    \varepsilon & \mathcal{E}_{1\times m} & \mathcal{E}_{1\times r} & M_{1,\bullet}
\end{array}
\right]
\]
\[
\mathcal{M}_2^{A,R,M} = \left[
\begin{array}{c|c|c|c}
    \varepsilon & \mathcal{E}_{m\times m} & \mathcal{E}_{m\times r} & M_{[2:n+1],\bullet} \\ \hline
    0 & \mathcal{E}_{1\times m} & \mathcal{E}_{1\times r} & \mathcal{E}_{1\times q} \\ \hline
    0 & \mathcal{E}_{1\times m} & \mathcal{E}_{1\times r} & \mathcal{E}_{1\times q}
\end{array}
\right].
\]

Then $\gamma_\mathcal{U}(Z)$ can be expressed as
\[
\gamma_\mathcal{U}(Z) = p_{n+1}\left(\left\langle \mathcal{M}_1^{A,R,M}, \mathcal{M}_2^{A,R,M} \right\rangle^s\right) \cap \R^n.
\]

As before, this formulation provides an outer description of the set $\gamma_\mathcal{U}(Z)$. We thus obtain the following corollary.

\begin{mycol}
Let $A \in \R^{n \times n}$ be a matrix, and let $Z,\mathcal{U}$ be two tropical polyhedra. Then the set $\gamma_\mathcal{U}(Z)$ is also a tropical polyhedron.
\end{mycol}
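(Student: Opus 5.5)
The plan is to lift everything to tropical cones in one higher dimension, and then use that tropical cones are closed under intersection and under coordinate projection (Lemma~\ref{projection of convex cone}). By the first lemma, write $\mathcal{U} = \mathcal{G}\cap\R^m$ with $\mathcal{G}=\mathrm{Span}(R)\subseteq\R^{m+1}$. Similarly write $Z=\mathcal{C}\cap\R^n$ with $\mathcal{C}=\mathrm{Span}(M)\subseteq\R^{n+1}$. Rescaling columns, we may normalize the first rows $R_{1,\bullet}$ and $M_{1,\bullet}$ to have entries in $\{0,\varepsilon\}$. Here the matrix multiplying $u$ in $\gamma_{\mathcal U}$ (called $A$ in the statement, playing the role of $B$) is $n\times m$.

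The first step is to unfold the definition:
\[
x\in\gamma_{\mathcal{U}}(Z)\iff \exists y,z:\ x\oplus A\otimes R_{[2:m+1],\bullet}\otimes y = M_{[2:n+1],\bullet}\otimes z,\ R_{1,\bullet}\otimes y=0,\ M_{1,\bullet}\otimes z=0 .
\]
Here $u=R_{[2:m+1],\bullet}\otimes y$ ranges over $\mathcal U$ exactly when $R_{1,\bullet}\otimes y=0$. The same description holds for $Z$ with $z$.

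The second step is to introduce an auxiliary homogenizing coordinate $t_0$ and consider the vector $t=(t_0,x,y,z)$. Each scalar equality above is a pair of opposite tropical inequalities between max-plus linear forms in $t$, with the constant $0$ replaced by $t_0$. So the set $T$ of all such $t$ is $\langle \mathcal{M}_1,\mathcal{M}_2\rangle^s$ for the block matrices displayed before the statement. This set is an intersection of finitely many tropical half-spaces, hence a tropical cone. The homogeneity check is that scaling $t$ by $\lambda$ scales every side by $\lambda$; this is where $t_0$ replaces the constant.

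The third step is to project $T$ onto its first $n+1$ coordinates $(t_0,x)$. Lemma~\ref{projection of convex cone} shows this projection is a tropical cone $\mathcal{K}\subseteq\R^{n+1}$. Slicing $\mathcal K$ at $t_0=0$ returns exactly $\gamma_{\mathcal U}(Z)$. By the converse direction of the first lemma, this slice $\mathcal{K}\cap\R^n$ is a tropical polyhedron.

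The main obstacle is that Lemma~\ref{projection of convex cone} needs a finite generating set of $T$, and $T$ is given in $\mathcal{M}$-form. I would invoke the Minkowski--Weyl type result of \cite{butkovivc2010max} quoted in the preliminaries: every tropical cone $\langle A,B\rangle$ is finitely generated. Its generators can be computed, e.g.\ by the tropical double description method of \cite{allamigeon2013computing}. Restricting these generators to the first $n+1$ coordinates then gives explicit generators of $\mathcal K$. Applying the construction $U,W$ from the first lemma to them yields the $\mathcal V$-form $\mathrm{Span}(W)\oplus\mathrm{Conv}(U)$ of $\gamma_{\mathcal U}(Z)$.

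A minor point also needs checking. Generators of $\mathcal K$ with $t_0\neq 0,\varepsilon$ are rescaled in that construction, and the slice must still be correct. It is, because on $T$ the constraints force $t_0$ to equal the values $R_{1,\bullet}\otimes y$ and $M_{1,\bullet}\otimes z$, so the slice at $t_0=0$ reproduces exactly the normalization conditions.
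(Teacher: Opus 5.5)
Your proposal is correct and follows the paper's own route: you homogenize with $t=(t_0,x,y,z)$ and encode the constraints as the tropical cone $\langle \mathcal{M}_1^{A,R,M},\mathcal{M}_2^{A,R,M}\rangle^s$. You then project onto the first $n+1$ coordinates via Lemma~\ref{projection of convex cone} and slice at $t_0=0$ using the first lemma, exactly as the paper does. You also make explicit two points the paper glosses over: finite generation of this cone (from \cite{butkovivc2010max}) is what lets the projection lemma apply, and the input matrix is really $n\times m$ rather than $n\times n$.
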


\section{Computation for $\phi_\mathcal{W}$}\label{sec: computation for phi}

We are now interested in the set propagation operator $\phi_\square$. Given a tropical polyhedron $Z$ of the form $\langle C,D \rangle \cap \R^n$ (with $C,D \in \R^{q\times (n+1)}$), we aim to characterize the set $\phi_{\mathcal{W}}(Z)$, where $\mathcal{W}$ is a tropical polyhedron in $\R^n$. In the following, we first show that this set is itself a tropical polyhedron, and then provide techniques to derive its inner representation.

\subsection{Simplification Step}

As shown in \cite{gaubert2011minimal}, $\mathcal{W}$ can be decomposed as the max-plus sum of two components: $\mathcal{W} = \mathcal{R} \oplus \mathcal{P}$, where $\mathcal{P}$ is a bounded tropical polyhedron and $\mathcal{R}$ is a tropical cone (its recession cone). The following proposition allows us to simplify the computation of $\phi_\mathcal{W}(Z)$.
\begin{mypro}
Let $Z$ and $\mathcal{W} = \mathcal{P}\oplus \mathcal{R}$ be defined as above. A necessary condition for $\phi_\mathcal{W}(Z)$ to be nonempty is that $\mathcal{R}$ is contained in the tropical cone $\langle C_{\bullet,[2:n+1]},D_{\bullet,[2:n+1]}\rangle$. In this case, we have $\phi_\mathcal{W}(Z) = \phi_{\mathcal{P}}(Z)$.
\end{mypro}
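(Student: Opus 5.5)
The plan is to work directly with the outer description of $Z$. Write $C' = C_{\bullet,[2:n+1]}$ and $D' = D_{\bullet,[2:n+1]}$. Then $x \in Z$ iff
\[
C_{\bullet,1} \oplus C'\otimes x \le D_{\bullet,1} \oplus D'\otimes x ,
\]
where the inequality is understood row by row. Throughout, the disturbance term is taken to be of the form $x\oplus w$ with $w\in\mathcal{W}\subseteq\R^n$, i.e.\ the matrix in front of the disturbance is absorbed into $\mathcal{W}$. Under this reading $\phi_\mathcal{W}(Z)=\{x \mid \forall w\in\mathcal{W},\ x\oplus w\in Z\}$.

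This absorption is only justified if the image of a tropical polyhedron under a max-plus linear map is again a tropical polyhedron. That holds because the generators of $\mathcal{R}$ and the extreme points of $\mathcal{P}$ are mapped to generators, by linearity. I also assume $\mathcal{W}\neq\emptyset$, hence $\mathcal{P}\neq\emptyset$; otherwise everything is vacuous. The proof then has two parts: the necessity of the recession-cone condition, and the equality $\phi_\mathcal{W}(Z)=\phi_\mathcal{P}(Z)$.

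\textbf{Necessity.} Suppose $x\in\phi_\mathcal{W}(Z)$, and fix $p\in\mathcal{P}$ and $r\in\mathcal{R}$. Since $\mathcal{R}$ is a tropical cone, $\lambda\cdot r\in\mathcal{R}$ for every $\lambda\in\mathbb{R}$, so $w_\lambda=p\oplus\lambda\cdot r\in\mathcal{W}$. Hence $y_\lambda = x\oplus p\oplus \lambda\cdot r\in Z$ for all $\lambda$.

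Now fix a row $i$ and suppose $(C'\otimes r)_i > (D'\otimes r)_i$; in particular $(C'\otimes r)_i\neq\varepsilon$. Each side of the $i$-th inequality is the maximum of a term independent of $\lambda$ and a term in $\lambda$:
\begin{itemize}
\item the left-hand side is at least $\lambda + (C'\otimes r)_i$;
\item the right-hand side equals $\max\bigl(K_i,\ \lambda + (D'\otimes r)_i\bigr)$, where $K_i = \bigl(D_{\bullet,1}\oplus D'\otimes(x\oplus p)\bigr)_i$ does not depend on $\lambda$.
\end{itemize}
Letting $\lambda\to+\infty$, the left-hand side eventually exceeds the right-hand side. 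This still works when $(D'\otimes r)_i=\varepsilon$ or $K_i=\varepsilon$, using the convention $\varepsilon<$ any real. This contradicts $y_\lambda\in Z$, so $C'\otimes r\le D'\otimes r$ for every $r\in\mathcal{R}$, i.e.\ $\mathcal{R}\subseteq\langle C',D'\rangle$. I expect this limiting argument to be the main (though mild) obstacle: the $\varepsilon$ entries and the possibility $x_j=\varepsilon$ must be handled carefully, which is why I isolate the $\lambda$-independent part $K_i$.

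\textbf{Equality, first inclusion.} Since $\varepsilon_n\in\mathcal{R}$ (take all coefficients $\varepsilon$), we have $\mathcal{P}=\mathcal{P}\oplus\{\varepsilon_n\}\subseteq\mathcal{W}$. Because $\phi$ quantifies universally, a larger disturbance set gives a smaller image, so $\phi_\mathcal{W}(Z)\subseteq\phi_\mathcal{P}(Z)$.

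\textbf{Equality, reverse inclusion.} Let $x\in\phi_\mathcal{P}(Z)$ and $w=p\oplus r$ with $p\in\mathcal{P}$ and $r\in\mathcal{R}$. Then $y=x\oplus p\in Z$. Distributing $\otimes$ over $\oplus$ and using $C'\otimes r\le D'\otimes r$,
\[
C\otimes(0,y\oplus r) = C\otimes(0,y)\oplus C'\otimes r \le D\otimes(0,y)\oplus D'\otimes r = D\otimes(0,y\oplus r).
\]
Hence $x\oplus w = y\oplus r\in Z$. As $w\in\mathcal{W}$ was arbitrary, $x\in\phi_\mathcal{W}(Z)$, which gives $\phi_\mathcal{W}(Z)=\phi_\mathcal{P}(Z)$.
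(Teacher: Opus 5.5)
Your proof is correct and follows the same route as the paper. For necessity, you send $\lambda\to+\infty$ along a recession direction $r$ with $(C'\otimes r)_i>(D'\otimes r)_i$, which violates the $i$-th row of the constraint for $x\oplus p\oplus\lambda\cdot r$. The inclusion $\phi_\mathcal{W}(Z)\subseteq\phi_\mathcal{P}(Z)$ comes from $\mathcal{P}\subseteq\mathcal{W}$, and the reverse inclusion from $C\otimes(0,y\oplus r)=C\otimes(0,y)\oplus C'\otimes r$.

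Your version is somewhat cleaner than the paper's in two respects, without changing the argument: you isolate the $\lambda$-independent term $K_i$, and you state explicitly that the disturbance matrix is absorbed into $\mathcal{W}$, which the paper assumes tacitly in this section.
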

\begin{proof}
    We first prove that $\mathcal{R}\subseteq \langle C_{\bullet,[2:n+1]},D_{\bullet,[2:n+1]}\rangle$ is a necessary condition for $\phi_\mathcal{W}(Z)$ being non empty. Suppose by contradiction that there exists $y \in \mathcal{R}$ such that $C_{\bullet,[2:n+1]}\otimes y \le D_{\bullet,[2:n+1]}\otimes y$ does not holds 
    (there exists index $i$, $(C_{i,[ 2:n+1]}|y)>(D_{i,[2:n+1]}|y)$) and there exists $x \in \R^n$ s.t. $\forall u \in \mathcal{W}$, $x \oplus u \in Z$, then for arbitrary $y' \in \mathcal{P}$ and for all $\lambda \in \R$, we have $C\otimes (0,x \oplus y' \oplus \lambda \cdot y) \le D \otimes (0,x \oplus y' \oplus \lambda \cdot y)$. As $(C_{i,[2,n+1]} |\lambda \cdot  y) > (D_{i,[2:n+1]}|\lambda \cdot y)$ for all $\lambda > \varepsilon$, fix $y' \in \mathcal{P}$, then for $\lambda$ large enough we will have $(C_i|(0,x\oplus y' \oplus \lambda \cdot y)) > (D_i |(0,x\oplus y' \oplus \lambda \cdot y))$, which is a contradiction. Once we have verified that the recession cone $\mathcal{R}$ is contained in $(C_{i,[ 2:n+1]}|y)>(D_{i,[2:n+1]}|y)$, we can prove that $\phi_{\mathcal{W}}(Z) = \phi_{\mathcal{P}}(Z)$. Indeed, it is immediate that $\phi_\mathcal{W}(Z) \subseteq \phi_{\mathcal{P}(Z)}$. To show that $\phi_\mathcal{P}(Z) \supseteq \phi_{\mathcal{P}}(Z)$, take $x \in \phi_{\mathcal{P}}(Z)$, then for all $y' \in \mathcal{P}$, we have $C \otimes(0,x \oplus y')\le D\otimes (0,x \oplus y') $. Let $u \in \mathcal{W}$ be of the form $u = y \oplus y'$ with $y' \in \mathcal{P}$ and $y\in \mathcal{R}$, since $C_{\bullet,[2:n+1]}\otimes y \le D_{\bullet,[2:n+1]}\otimes y$, we have $C\otimes (0,x \oplus y'\oplus y) \le D \otimes (0,x \oplus y' \oplus y)$, hence $C\otimes (0,x \oplus u) \le D \otimes (0,x \oplus u)$ which proves that $x \in \phi_\mathcal{W}(Z)$.
\end{proof}

\begin{myexm}
    Let us consider $\mathcal{W'}\subseteq \R^3$ the tropical cone generated by the following set $$\{[0,1,1]^T,[0,3,1]^T,[0,1,3]^T,[\varepsilon,0,0],[\varepsilon,1,0]\}$$
    Then $\mathcal{W} = \mathcal{W}'\cap \R^2$ is the tropical polyhedra of the following form:
    \[\mathcal{W} = 
    \mathrm{Span}\left(
    \begin{bmatrix}
        0\\ 0
    \end{bmatrix}
    ,
    \begin{bmatrix}
        1\\ 0
    \end{bmatrix}
    \right)
    \oplus
    \mathrm{Conv}\left(
    \begin{bmatrix}
        1\\1
    \end{bmatrix}
    ,
    \begin{bmatrix}
        3\\ 1
    \end{bmatrix}
    ,
    \begin{bmatrix}
        1\\3
    \end{bmatrix}
    \right)\]
    where the first part is the recession cone $\mathcal{R}$ of $\mathcal{W}$ and the second part is the bounded polyhedra $\mathcal{P}$. We consider the safety set $S$ in $\R^2$ defined by $S = \mathrm{Span}([1,0]^T,[0,1]^T)$, or equivalently, in its $\mathcal{M}$-form
\[S = \left\langle 
\begin{bmatrix}
    \varepsilon & \varepsilon & 0 \\
    \varepsilon & -1 &\varepsilon
\end{bmatrix} , 
\begin{bmatrix}
    \varepsilon& 1 & \varepsilon \\
    \varepsilon & \varepsilon & 0
\end{bmatrix}
\right\rangle
\cap \R^2
\]
which we denote as $S = \langle C,D\rangle\cap \R^2$. We can verifty that $(C_{1,\ge 2}|[0,0]^T) = 0 \le (D_{1,\ge 2}|[0,0]^T) = 1$, and $(C_{2,\ge2}|[1,0]^T) = -1 \le (D_{2,\ge2}|[1,0]^T) = 0$, thus $\mathcal{R} \subseteq S$, which implies that $\phi_\mathcal{U}(S)$ may not be empty and we have $\phi_\mathcal{U}(S) = \phi_\mathcal{R}(S)$.
\end{myexm}

\subsection{Reform $\phi_\mathcal{P}(Z)$ as An Intersection of Pseudo Half-Spaces}
Now the problem reduces to characterizing $\phi_\mathcal{P}(Z)$, where $\mathcal{P}$ is a bounded polyhedron. Let $\mathcal{P} = \mathrm{Conv}(e_1,\dots,e_M)$, and define $\mathcal{P}' = \mathrm{Span}((0,e_i)i)$. Then we have $\mathcal{P} = \mathcal{P}'\cap \R^n$. The set $\phi_\mathcal{P}(Z)$ can therefore be rewritten as
\[\phi_\mathcal{P}(Z) = \left\{x \in \R^{n+1}\left|
\begin{aligned}
    &\forall y \in \mathcal{P}',\ \text{s.t. } y_1 = x_1\\
    &C\otimes (x\oplus y) \le D \otimes (x \oplus y)
\end{aligned}
\right\}\right.\cap \R^n\]

To better understand its structure, we introduce the following definition.
\begin{mydef}[\bf Pseudo Half-Space]
A pseudo half-space  is a set of the form 
\[\mathscr{H}^\mathcal{U}_{c,d} = \left\{x \in \R^{n+1}\left|
\begin{aligned}
    &\forall y \in \mathcal{U},\ y_1 = x_1 \\
    &(c|x\oplus y)\le (d|x\oplus y),
\end{aligned}
\right\}\right.\]
where $c,d \in \R^{n+1}$ are two vectors and $\mathcal{U}$ is a tropical cone.
\end{mydef}

With this definition, the set $\phi_\mathcal{P}(Z)$ has the form $\phi_\mathcal{P}(Z) = \left(\bigcap_{i=1}^q\mathscr{H}_{C_{i},D_{i}}^{\mathcal{P}'}\right)\cap\R^n$. We now study some structural properties of this set.
\begin{mylem}
    The set $\mathscr{H}_{C_{i},D_{i}}^{\mathcal{P}'}$ is a closed sub-semimodule. In particular $E = \bigcap_{i=1}^q\mathscr{H}_{C_{i},D_{i}}^{\mathcal{P}'}$ is a closed sub-semimodule.
\end{mylem}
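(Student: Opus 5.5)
The plan is to first rewrite the quantifier in the definition of $\mathscr{H}_{C_i,D_i}^{\mathcal{P}'}$ so that the coupling between $y$ and $x$ becomes explicit. After that, the sub-semimodule property and closedness both reduce to elementary properties of the scalar product $(\cdot|\cdot)$.

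\textbf{Step 1: parametrize the admissible $y$.} Recall that $\mathcal{P}'=\mathrm{Span}((0,e_1),\dots,(0,e_M))$. Every $y\in\mathcal{P}'$ has the form $y=\bigoplus_i\lambda_i\cdot(0,e_i)$, so $y_1=\bigoplus_i\lambda_i$.
\begin{itemize}
\item If $x_1\neq\varepsilon$, the constraint $y_1=x_1$ means $\bigoplus_i(\lambda_i-x_1)=0$. Hence $y=x_1\cdot(0,p)$ with $p=\bigoplus_i(\lambda_i-x_1)\cdot e_i\in\mathcal{P}=\mathrm{Conv}(e_1,\dots,e_M)$.
\item If $x_1=\varepsilon$, then all $\lambda_i=\varepsilon$, so $y=\varepsilon_{n+1}$. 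This equals $x_1\cdot(0,p)$ for any $p\in\mathcal{P}$, with the convention $\varepsilon\otimes a=\varepsilon$.
\end{itemize}
Since $\mathcal{P}\neq\emptyset$, in both cases we obtain
\[
\mathscr{H}_{c,d}^{\mathcal{P}'}=\bigcap_{p\in\mathcal{P}}\left\{x\in\R^{n+1}\ \middle|\ (c|x\oplus x_1\cdot(0,p))\le(d|x\oplus x_1\cdot(0,p))\right\}.
\]

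\textbf{Step 2: sub-semimodule property.} Take $x,x'\in\mathscr{H}_{c,d}^{\mathcal{P}'}$ and scalars $\mu,\lambda$, and set $z=\mu\cdot x\oplus\lambda\cdot x'$. Then $z_1=\mu x_1\oplus\lambda x'_1$. For a fixed $p\in\mathcal{P}$, distributivity gives
\[
z\oplus z_1\cdot(0,p)=\mu\cdot\bigl(x\oplus x_1\cdot(0,p)\bigr)\oplus\lambda\cdot\bigl(x'\oplus x'_1\cdot(0,p)\bigr).
\]
The map $v\mapsto(c|v)$ is max-plus linear, and $\le$ is preserved by $\oplus$ and by scalar multiplication. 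The inequalities for $x$ and $x'$, both taken at the \emph{same} $p$, therefore combine into the inequality for $z$ at that $p$. The point of Step 1 is exactly this: using one common $p$ makes the argument go through.

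\textbf{Step 3: closedness.} For each fixed $p$, the map $x\mapsto x\oplus x_1\cdot(0,p)$ is continuous for the metric $d(x,y)=|\exp x-\exp y|$. Under $\exp$, $\oplus$ becomes max and $\otimes$ becomes multiplication. The only point needing care is $x_1\to\varepsilon$: there $\exp(x_1)\exp(p_j)\to 0$ because $p$ is fixed and finite-valued, and $\varepsilon$ coordinates of $p$ are harmless. For the same reason, $v\mapsto(c|v)$ and $v\mapsto(d|v)$ are continuous. So each set in the intersection is the preimage of the closed set $\{(a,b)\mid a\le b\}$ under a continuous map, hence closed, and an arbitrary intersection of closed sets is closed.

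Finally, $E$ is a finite intersection of closed sub-semimodules, so it is again a closed sub-semimodule. I expect the main obstacle to be Step 1, namely handling the dependence of the quantified $y$ on $x$, in particular the degenerate case $x_1=\varepsilon$. Once the constraint is rewritten through the fixed parameter $p\in\mathcal{P}$, both properties follow routinely.
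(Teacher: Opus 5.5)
Your proof is correct, but it takes a different route from the paper's.

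The paper works directly with the quantified definition and keeps the test vector $y$ tied to $x$, rescaling it for each operation:
\begin{itemize}
\item for scalar multiplication it passes to $(-\lambda)\cdot y$, with a separate check for $\lambda=\varepsilon$;
\item for $x\oplus x'$ it splits into the cases $x_1=x_1'$ and $x_1<x_1'$, pairing $x'$ with $y$ and $x$ with $(x_1-x_1')\cdot y\le y$;
\item for closedness it builds, along a convergent sequence $x^{(n)}$, rescaled test vectors $y^{(n)}$ with $y^{(n)}_1=x^{(n)}_1$, again treating $x_1=\varepsilon$ separately.
\end{itemize}

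Your normalization $y=x_1\cdot(0,p)$ with $p\in\mathcal{P}$ held fixed removes the dependence of the quantifier on $x$. Each constraint then becomes max-plus linear in $x$. With $u=(0,p)$, it is the ordinary tropical half-space $\{x\mid (M_u^T\otimes c|x)\le (M_u^T\otimes d|x)\}$, where $M_u$ is the matrix the paper introduces later to compute $\rho$. Both the sub-semimodule property and closedness then follow uniformly from linearity and continuity, and all the degenerate bookkeeping stays in your Step 1.

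Your argument also proves more: $\mathscr{H}^{\mathcal{P}'}_{c,d}$ is an intersection of closed tropical half-spaces. The paper uses exactly this reformulation afterwards, reduced to a finite generating set $U$, to compute $\rho$ and to justify the extremal-point filtering.

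Both proofs rest on the same structural facts about $\mathcal{P}'$. Its only element with first coordinate $\varepsilon$ is $\varepsilon_{n+1}$. It also contains an element with first coordinate $0$, which is your hypothesis $\mathcal{P}\neq\emptyset$ and which the paper uses in its $x_1=\varepsilon$ closedness case.
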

\begin{proof}
We first prove that $\mathscr{H}{C{i},D_{i}}^{\mathcal{P}'}$ is a sub-semimodule. Indeed, for all $\lambda \in \R\setminus \{\varepsilon\}$ and $x \in \mathscr{H}_{C_{i},D_{i}}^{\mathcal{P}'}$, for all $y \in \mathcal{P}'$ such that $(\lambda\cdot x)_1 = y_1$, we have $x_1 = ((-\lambda)\cdot y)_1$ and $(-\lambda)\cdot y \in \mathcal{P}'$, thus $(C_i | (x \oplus ((-\lambda)\cdot y))) \le (D_i |(x \oplus ((-\lambda)\cdot y)))$, which is equivalent to $(C_i | (\lambda\cdot x \oplus y))\le (D_i |(\lambda\cdot x \oplus y))$. For $\lambda = \varepsilon$, we have $\lambda\cdot x = \varepsilon_{n+1}$, thus $y_1=\varepsilon$, which implies $y = \varepsilon_{n+1}$, we then have $\varepsilon = (C_i| (\lambda\cdot x \oplus y)) \le (D_i|  (\lambda\cdot x \oplus y))= \varepsilon$.  Now let $x,x' \in  \mathscr{H}_{C_{i},D_{i}}^{\mathcal{P}'}$, if $x_1 = x_1'$, then it is trivial that $x\oplus x' \in  \mathscr{H}_{C_{i},D_{i}}^{\mathcal{P}'}$. Suppose now $x_1<x_1'$, let $y\in \mathcal{P}'$ such that $y_1 = (x\oplus x')_1$, then $y_1 = x'_1>\varepsilon$. We let $y' = (x_1-x'_1)\cdot y$. We verifies quickly that $y' \le y$ and $x \oplus y' \in \langle C_i,D_i \rangle$. Since $x' \oplus y \in \langle C_i,D_i \rangle$, we have $(x \oplus y') \oplus (x' \oplus y) = (x \oplus x') \oplus y \in \langle C_i,D_i \rangle$. We have proved that $ \mathscr{H}_{C_{i},D_{i}}^{\mathcal{P}'}$ is a subsemimodule. Further more, we can show that $ \mathscr{H}_{C_{i},D_{i}}^{\mathcal{P}'}$ is closed. Indeed, let $\{x^{(n)}\}_{n\in \mathbb{Z}_{>0}}$ be a sequence in $ \mathscr{H}_{C_{i},D_{i}}^{\mathcal{P}'}$ converging to a point $x \in \R^{n+1}$ and let $y \in \mathcal{U}$ such that $y_1 = x_1$. We distinguish two cases: 
\begin{enumerate}
    \item If $x_1\neq \varepsilon$ then $y^{(n)} = (x_1-x_1^{(n)})\cdot y$ is a sequence in $\mathcal{P}'$ converging to $y$ and is such that $y^{(n)}_1=x^{(n)}_1$ for all $n$. Since $x^{(n)} \oplus y^{(n)} \in \langle C,D\rangle$ and $\langle C_i,D_i\rangle$ is a closed set, we conclude that $x \oplus y \in \langle C_i,D_i\rangle$.
    \item If $x_1 = \varepsilon$ then $y = \varepsilon_{n+1}$. Take an arbitrary $y^{(0)} \in \mathcal{U}$ s.t. $y_1^{(0)} = 0$. For $n \in \mathbb{Z}_{>0}$, define $y^{(n)} = x_1^{(n)}\cdot y^{(0)}$, then $\{y^{(n)}\}_n$ converge to $y$ and we have $\{x^{(n)} \oplus y^{(n)}\}$ converge to $x\oplus y = x$, which proves that $x \in \mathscr{H}_{C_{i},D_{i}}^{\mathcal{P}'}$.
\end{enumerate}

\end{proof}

Another noteworthy property is that the complement of $\mathscr{H}^{\mathcal{U}}_{c,d}$ is also stable under addition and scalar multiplication. Indeed, we have
\[
\left(\mathscr{H}^{\mathcal{U}}_{c,d}\right)^c = \left\{x \in \R^{n+1}\left|
\begin{aligned}
    &\exists y \in \mathcal{U},\ y_1 = x_1 \\
    &(c|x\oplus y)> (d|x\oplus y)
\end{aligned}
\right\}\right.
\]

Let $x,x' \in \left(\mathscr{H}^{\mathcal{U}}_{c,d}\right)^c$. Then there exist $y,y' \in \mathcal{U}$ such that $y_1 = x_1, y'_1 = x'_1$ and $(c|x\oplus y) > (d|x \oplus y), (c|x'\oplus y') > (d|x' \oplus y')$ . We then have
\begin{align}
&(c|(x\oplus x')\oplus (y\oplus y')) 
 = (c|x\oplus y) \oplus (c|x' \oplus y') \nonumber\\
 > &(d|x\oplus y) \oplus (d|x' \oplus y')  
 = (d|(x\oplus x')\oplus (y\oplus y')),  \nonumber 
\end{align} 
and moreover $(y_1\oplus y_2)^{(1)} = (x_1\oplus x_2)^{(1)}$. This shows that $x_1\oplus x_2 \in \left(\mathscr{H}^{\mathcal{U}}_{c,d}\right)^c$.

Now let $\lambda \in \R$ be a scalar. We readily verify that $(\lambda\cdot y)_1 = (\lambda \cdot x)_1$ and $(c|\lambda\cdot x \oplus \lambda \cdot y) > (d|\lambda\cdot x \oplus \lambda \cdot y)$. Therefore, $\lambda \cdot x \in \left(\mathscr{H}^{\mathcal{U}}_{c,d}\right)^c$.

\subsection{Induction Step for Calculating Generating Set for $\phi_\mathcal{P}(Z)$}

In the previous section, we established that the set $E = \bigcap_{i=1}^q \mathscr{H}^\mathcal{U}_{C_i,D_i}$ possesses a favorable mathematical structure. Building on these results, the goal of this section is to prove that $E$ is finitely generated, i.e., that it is a tropical cone, and to develop an algorithm for computing a generating set. The following theorem plays a key role in the recursive procedure underlying our algorithm.

\begin{mythm}\label{thm: Generating set for C cap H}
    Let $\mathscr{C} \subseteq \R^{n+1}$ be a closed tropical cone generated by a set $V$ of elements of $\R^{n+1}$, and let $\mathscr{H}^\mathcal{U}_{c,d}$ be a pseudo tropical half-space. Then the cone $\mathscr{C}\cap \mathscr{H}^{\mathcal{U}}_{c,d}$ is generated by the following set:
    \[(V \cap \mathscr{H}_{c,d}^\mathcal{U}) \cup \left\{v \oplus \rho\cdot w \left|
    \begin{aligned}
        &(v,w) \in (V\cap \mathscr{H}_{c,d}^{\mathcal{U}})\times (V \setminus\mathscr{H}^\mathcal{U}_{c,d})\\
        &\rho =  \max\{\lambda\mid v \oplus \lambda\cdot w \in \mathscr{H}_{c,d}^\mathcal{U}\}
    \end{aligned}
    \right\}\right.\]
\end{mythm}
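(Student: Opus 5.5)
This is the tropical analogue of the classical double description step for intersecting a cone with a half-space, and I will follow the standard proof of Allamigeon--Gaubert--Goubault. Two structural facts do the work: by the preceding lemma, $\mathscr{H}^{\mathcal{U}}_{c,d}$ is a closed sub-semimodule, and its complement is stable under $\oplus$ and under scalar multiplication by $\lambda\neq\varepsilon$.

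\textbf{Step 1: the proposed set lies in $\mathscr{C}\cap\mathscr{H}^{\mathcal{U}}_{c,d}$ and $\rho$ is well defined.} Every generator lies in $\mathscr{C}$, since $\mathscr{C}$ is a cone. For fixed $v\in V\cap\mathscr{H}$ and $w\in V\setminus\mathscr{H}$, set $\Lambda=\{\lambda\mid v\oplus\lambda\cdot w\in\mathscr{H}\}$.
\begin{itemize}
\item $\varepsilon\in\Lambda$, so $\Lambda$ is nonempty.
\item $\Lambda$ is closed, by closedness of $\mathscr{H}$ and continuity of $\lambda\mapsto v\oplus\lambda\cdot w$.
\item $\Lambda$ is downward closed. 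If $\mu\le\lambda\in\Lambda$, then $v\oplus\mu\cdot w\oplus\alpha\cdot(v\oplus\lambda\cdot w)$, with $\alpha=\mu-\lambda\le 0$, equals $v\oplus\mu\cdot w$, and this lies in $\mathscr{H}$ because $\mathscr{H}$ is a sub-semimodule.
\item $\Lambda$ is bounded above. Otherwise, for large $\lambda$ we have $\lambda\cdot w\in\mathscr{H}^c$. If $v\oplus\lambda\cdot w$ were in $\mathscr{H}$, I would compare it with $\lambda\cdot w$ and use that the obstruction to membership is witnessed by some $y\in\mathcal{U}$ with $y_1=\lambda+w_1$, together with a domination argument once $\lambda w$ dominates $v$ coordinatewise on the relevant support. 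This needs care when $w$ has $\varepsilon$ entries where $v$ does not. If $\Lambda$ turns out to be all of $\mathbb{R}_{\max}$, I would instead interpret the generator $v\oplus\rho\cdot w$ with $\rho=+\infty$ as $w$ itself, or argue that this case is impossible.
\end{itemize}
Hence $\rho=\max\Lambda$ exists and the new vectors lie in $\mathscr{H}$.

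\textbf{Step 2: generation.} Take $x\in\mathscr{C}\cap\mathscr{H}$ and write $x=\bigoplus_{v\in I}\lambda_v v\oplus\bigoplus_{w\in J}\mu_w w$, with $I\subseteq V\cap\mathscr{H}$ and $J\subseteq V\setminus\mathscr{H}$, and the scalars non-$\varepsilon$. I aim to show
\[x=\bigoplus_{v\in I}\lambda_v v\oplus\bigoplus_{v\in I,w\in J}\lambda_v(v\oplus\rho_{vw}w),\]
which would finish the proof. The right-hand side is $\le x$ provided $\lambda_v+\rho_{vw}\le\mu_w$. So two things are needed:
\begin{enumerate}
\item for each $w\in J$, some $v\in I$ satisfies $\lambda_v+\rho_{vw}\ge\mu_w$, which gives $\ge$;
\item the bound $\le$ above, where $\lambda_v+\rho_{vw}>\mu_w$ can be handled by lowering the coefficient to $\mu_w$, since $\Lambda$ is downward closed.
\end{enumerate}

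\textbf{Step 3: the key claim.} This is where the non-trivial structure enters. Suppose, for contradiction, that some $w\in J$ has $\lambda_v+\rho_{vw}<\mu_w$ for every $v\in I$. Then for each $v\in I$, the vector $\lambda_v v\oplus\mu_w w=\lambda_v\cdot(v\oplus(\mu_w-\lambda_v)w)$ lies in $\mathscr{H}^c$, because $\mu_w-\lambda_v$ exceeds $\max\Lambda$. The remaining $\mu_{w'}w'$ with $w'\in J$ are also in $\mathscr{H}^c$. Since $\mathscr{H}^c$ is stable under $\oplus$, the sum $x$ of all these terms would be in $\mathscr{H}^c$. This contradicts $x\in\mathscr{H}$. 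The degenerate case $I=\emptyset$ gives $x\in\mathscr{H}^c$ directly, unless $x=\varepsilon$.

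I expect the main obstacle to be Step 1's boundedness and existence of the maximum, in particular handling $\varepsilon$ coordinates, and the requirement $y_1=x_1$ in the pseudo half-space, which couples the first coordinate to the witness $y$. The cleanest route is to use only the two stability properties proved above plus closedness. Boundedness should follow from $\lambda w\in\mathscr{H}^c$ with a witness $y$: the combination $v\oplus\lambda w\oplus y'$ with a suitable $y'\le$ witness eventually violates the inequality, because the violating term $(c|\lambda w\oplus y)$ grows linearly in $\lambda$ while the contribution of $v$ stays fixed.
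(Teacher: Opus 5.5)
\textbf{Verdict: genuine gaps.} Your Step 3 is the paper's key argument: assuming $\lambda_v+\rho_{vw}<\mu_w$ for all $v\in I$ and using that the complement of $\mathscr{H}=\mathscr{H}^{\mathcal{U}}_{c,d}$ is stable under $\oplus$ and non-$\varepsilon$ scalars. The overall architecture also matches the paper. But two steps are not established.

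\textbf{First gap: existence of $\rho$.} You never prove that $\Lambda$ is bounded above. Your fallback of reading $v\oplus(+\infty)\cdot w$ as $w$ is invalid: $w\notin\mathscr{H}$, so adding it would break the containment of the generating set in $\mathscr{C}\cap\mathscr{H}$. The missing argument is short and uses only the two properties you already listed.
\begin{itemize}
\item If $\lambda\in\Lambda$ for arbitrarily large $\lambda$, the sub-semimodule property gives $(-\lambda)\cdot(v\oplus\lambda\cdot w)=(-\lambda)\cdot v\oplus w\in\mathscr{H}$.
\item As $\lambda\to+\infty$ this converges to $w$.
\item Closedness of $\mathscr{H}$ then forces $w\in\mathscr{H}$, a contradiction.
\end{itemize}
Together with your observation that $\Lambda$ is nonempty and closed, this gives $\rho=\max\Lambda$.

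Your witness-based idea can be completed in the paper's setting, where $\mathcal{U}=\mathcal{P}'$ is generated by vectors with first coordinate $0$:
\begin{itemize}
\item if $w_1\neq\varepsilon$, use the witness $\lambda\cdot y$;
\item if $w_1=\varepsilon$, then $y=\varepsilon_{n+1}$, so $(c|w)>(d|w)$, and any fixed $y'\in\mathcal{U}$ with $y'_1=v_1$ works.
\end{itemize}
As written, though, this case analysis is left undone, and the closedness argument avoids it entirely.

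\textbf{Second gap: the identity in Step 2.} The identity you aim for is false as displayed. Whenever $\lambda_v+\rho_{vw}>\mu_w$, the term $\lambda_v\cdot(v\oplus\rho_{vw}\cdot w)$ contains $(\lambda_v+\rho_{vw})\cdot w$, which can strictly exceed $x$. Your repair, lowering via downward closedness of $\Lambda$, effectively replaces $\rho_{vw}$ by $\mu_w-\lambda_v$. That produces $\lambda_v\cdot(v\oplus(\mu_w-\lambda_v)\cdot w)$, which lies in $\mathscr{C}\cap\mathscr{H}$ but is not in the proposed generating set. Expressing it through that set is exactly what the theorem asserts, so the argument becomes circular.

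What you must lower is the scalar in front of the generator, not $\rho$:
\begin{itemize}
\item For each $w\in J$, pick $v_w\in I$ with $\lambda_{v_w}+\rho_{v_w w}\ge\mu_w$. Your Step 3 provides such a $v_w$, and the inequality forces $\rho_{v_w w}\neq\varepsilon$.
\item Use $(\mu_w-\rho_{v_w w})\cdot(v_w\oplus\rho_{v_w w}\cdot w)=(\mu_w-\rho_{v_w w})\cdot v_w\oplus\mu_w\cdot w$.
\item Since $\mu_w-\rho_{v_w w}\le\lambda_{v_w}$, the first term is absorbed by $\lambda_{v_w}\cdot v_w$.
\item Summing over $w\in J$ together with $\bigoplus_{v\in I}\lambda_v\cdot v$ recovers $x$ exactly.
\end{itemize}
This is the paper's reconstruction, and downward closedness of $\Lambda$ is never needed.
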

 
\begin{proof}
Note that $\rho$ is well defined, in the sense that the set $M = \{\lambda \mid v \oplus \lambda\cdot w \in \mathscr{H}^{\mathcal{U}}_{c,d}\}$ admits a maximal element. Indeed, $M$ is nonempty since $\varepsilon \in M$. We claim that $M$ also admits a greatest element. Suppose, by contradiction, that this is not the case. Then $(-\lambda)\cdot v \oplus w \in \mathscr{H}^\mathcal{U}_{c,d}$ holds for arbitrarily large $\lambda$. Letting $\lambda \to +\infty$, we obtain that $(-\lambda)\cdot v \oplus w \in \mathscr{H}^{\mathcal{U}}_{c,d}$ converges to $w \notin \mathscr{H}^{\mathcal{U}}_{c,d}$, which contradicts the fact that $\mathscr{H}^{\mathcal{U}}_{c,d}$ is closed.
    
Now, we observe that the set defined in Theorem \ref{thm: Generating set for C cap H} is contained in $\mathscr{C}\cap \mathscr{H}^{\mathcal{U}}_{c,d}$. Let $x \in \mathscr{C}\cap \mathscr{H}^{\mathcal{U}}_{c,d}$. By the tropical analogue of the Minkowski--Carathéodory theorem \cite{gaubert2007minkowski}, $x$ can be expressed as a combination of at most $n+1$ elements of $V$. That is, there exist $V' \subseteq V \cap \mathscr{H}^{\mathcal{U}}_{c,d}$ and $W' \subseteq V \setminus \mathscr{H}^{\mathcal{U}}_{c,d}$ such that $|V'|+ |W'| \le n+1$, and
\[
x = \bigoplus_{v \in V' }\lambda_v \cdot v \oplus \bigoplus_{w\in W'}\lambda_w \cdot w
\]
where $\lambda_v,\lambda_w \in \R\setminus \{\varepsilon\}$. Denote $\rho_{(v,w)} = \max\{\lambda\mid v \oplus \lambda\cdot w\in \mathscr{H}^{\mathcal{U}}_{c,d}\}$.

We  show that for every $w \in W'$, there exists $v \in V'$ such that $\lambda_v\otimes \rho_{(v,w)}\ge \lambda_w$. Suppose, by contradiction, that there exists $w \in W'$ such that $\lambda_v \otimes \rho_{(v,w)} < \lambda_w$ for all $v \in V'$. Then $\lambda_v \cdot v \oplus \lambda_w\cdot w \notin \mathscr{H}^{\mathcal{U}}_{c,d}$ (since $\lambda_v \neq \varepsilon$). Since the complement of $\mathscr{H}^{\mathcal{U}}_{c,d}$ is stable under addition and multiplication by a nonzero scalar, it follows that
\[
x = \left(\bigoplus_{v \in V'}\lambda_v\cdot v\oplus \lambda_w\cdot w\right) \oplus \left(\bigoplus_{w' \in W' \setminus\{w\}}\lambda_{w'}\cdot w'\right)
\]
does not belong to $\mathscr{H}^{\mathcal{U}}_{c,d}$, which contradicts the assumption that $x \in \mathscr{H}^{\mathcal{U}}_{c,d}$.

For each $w \in W'$, let $v_w$ be an element of $V'$ such that $\lambda_{v_w}\otimes \rho_{(v_w,w)} \ge \lambda_w$. Since $\lambda_w \neq \varepsilon$, we have $\rho_{(v_w,w)}\neq \varepsilon$. Hence,
\[
\begin{aligned}
    x =& \left(\bigoplus_{v\in V'} \lambda_v \cdot v\right)\oplus \\ &\left(\bigoplus_{w\in W'}(\lambda_w-\rho_{(v_w,w)})\cdot (v_w\oplus \rho_{(v_w,w)}\cdot w)\right).
\end{aligned}
\]
\end{proof}
To compute a generating set for $E$, we proceed as follows. We start with $V = \{e_i\}_{i=1,\dots,n+1}$, which is a generating set for $\R^{n+1}$. We then construct the generating set inductively: if $V_j$ is a generating set for $\mathscr{C}_j = \bigcap_{i=1}^j\mathscr{H}^{\mathcal{U}}_{C_i,D_i}$, then Theorem \ref{thm: Generating set for C cap H} allows us to derive a generating set for
\(
\bigcap_{i=1}^{j+1}\mathscr{H}^{\mathcal{U}}_{C_i,D_i} = \mathscr{C}_j \cap \mathscr{H}^{\mathcal{U}}_{C_{j+1},D_{j+1}}.
\)
However, a key difficulty is that the set described in Theorem \ref{thm: Generating set for C cap H} is not straightforward to compute. The next section addresses this issue.

\section{Computation methods for $\phi_\square$}\label{sec: computation phi}

To compute a generating set for $E$, we need to determine the set $V \cap \mathscr{H}^{\mathcal{U}}_{c,d}$. Moreover, for $v \in V\cap \mathscr{H}^{\mathcal{U}}_{c,d}$ and $w \in V \setminus \mathscr{H}^{\mathcal{U}}_{c,d}$, we need to compute
\[
\rho = \max\{\lambda \mid v \oplus \lambda\cdot w \in \mathscr{H}^{\mathcal{U}}_{c,d}\}.
\]

We now explain how to determine, for a given $v \in V$, whether $v \in \mathscr{H}^{\mathcal{U}}_{c,d}$. The case $v_1 = \varepsilon$ is trivial. Suppose now that $v_1 \neq \varepsilon$, and for simplicity assume that $v_1 = 0$. Then $v \in \mathscr{H}^{\mathcal{U}}_{c,d}$ if and only if for all $y \in \mathcal{U}$ such that $y_1 = 0$, we have $(c|v\oplus y)\le (d|v \oplus y)$.

Let $U$ be a generating set for $\mathcal{U}$, and assume for simplicity that for all $u \in U$, $u_1 = 0$. A necessary condition for $v \in V \cap \mathscr{H}^{\mathcal{U}}_{c,d}$ is that for all $u \in U$, $(c|v \oplus u) \le (d|v \oplus u)$. In fact, this condition is also sufficient. Suppose it holds, and let $y = \bigoplus_{u \in U}\lambda_u\cdot u$, with $\bigoplus_{u} \lambda_u=0$. We show that $(c|v \oplus y) \le (d|v \oplus y)$ by considering two cases:

\begin{enumerate}
    \item If $(c|v) > (d|v)$, then for all $u \in U$, since $(c|v \oplus u) \le (d| v \oplus u)$, we must have $(c|u) \le (d|u)$. As $\bigoplus_u\lambda_u = 0$, there exists $u_0 \in U$ such that $\lambda_{u_0} = 0$. Hence,
        \[
        (c|v \oplus y) = (c| v \oplus u_0)\oplus \left(c\left|\bigoplus_{u\in U\setminus\{u_0\}}\lambda_u\cdot u\right)\right.
        \]
        \[
        \le (d| v \oplus u_0)\oplus \left(d\left|\bigoplus_{u\in U\setminus\{u_0\}}\lambda_u\cdot u\right)\right. = (d|v \oplus y)
        \]
    \item If $(c|v) \le (d|v)$, we claim that for all $u \in U$ and for all $\lambda \le 0$, we have $(c|v \oplus \lambda \cdot u) \le (d | v \oplus \lambda \cdot u)$. The case $(c|u)\le (d|u)$ is immediate. If $(c|u) > (d|u)$, then from $(c|v\oplus u) \le (d|v\oplus u)$, it follows that $(c|u) \le (d|v)$. Hence, for $\lambda \le 0$, we have $(c|\lambda\cdot u) \le (d|v)$, which implies $(c|v \oplus \lambda\cdot u) \le (d|v \oplus \lambda \cdot u)$.
\end{enumerate}

\begin{myexm}\label{exm: v1v2 in HcdU}
    Let $c = (\varepsilon,\varepsilon,0)$, $d = (\varepsilon,1,\varepsilon)$, and $\mathcal{U} = \mathrm{Span}((0,1,1),(0,3,1),(0,1,3))$, given the vectors $v_1 = (0,2,3)$ and $v_2 = (0,1,2)$, we want to decide wether $v_1,v_2 \in \mathscr{H}^{\mathcal{U}}_{c,d}$.

    For the vector $v_1$, we have $v_1 \oplus (0,1,1) = (0,2,3)$, $v_1 \oplus (0,1,3) = (0,2,3)$, $v_2 \oplus (0,3,1) = (0,3,3)$. As we have $(c|(0,2,3)) = 3 = (d|(0,2,3))$ and $(c|(0,3,3)) = 3 < (d|(0,3,3)) = 4$, we have $v_1 \in \mathscr{H}^{\mathcal{U}}_{c,d}$.

    For the vector $v_2$, we have $v_2\oplus (0,1,1) = (0,1,2)$, $v_2 \oplus(0,1,3) = (0,1,3)$ and $v_2 \oplus (0,3,1) = (0,3,2)$. We have $(c|(0,1,2)) = (d|(0,1,2)) = 2$ and $(c|(0,3,2)) = 2 < (d|(0,3,2))=4$, but as $(c|(0,1,3)) = 3 > (d|(0,1,3) )=2$, we conclude that $v_2 \notin \mathscr{H}_{c,d}^\mathcal{U}$.
    We provide a geometric representation in figure \ref{fig: v1v2 in HcdU} for example \ref{exm: v1v2 in HcdU}
\begin{figure}
    \centering
    \includegraphics[width=1\linewidth]{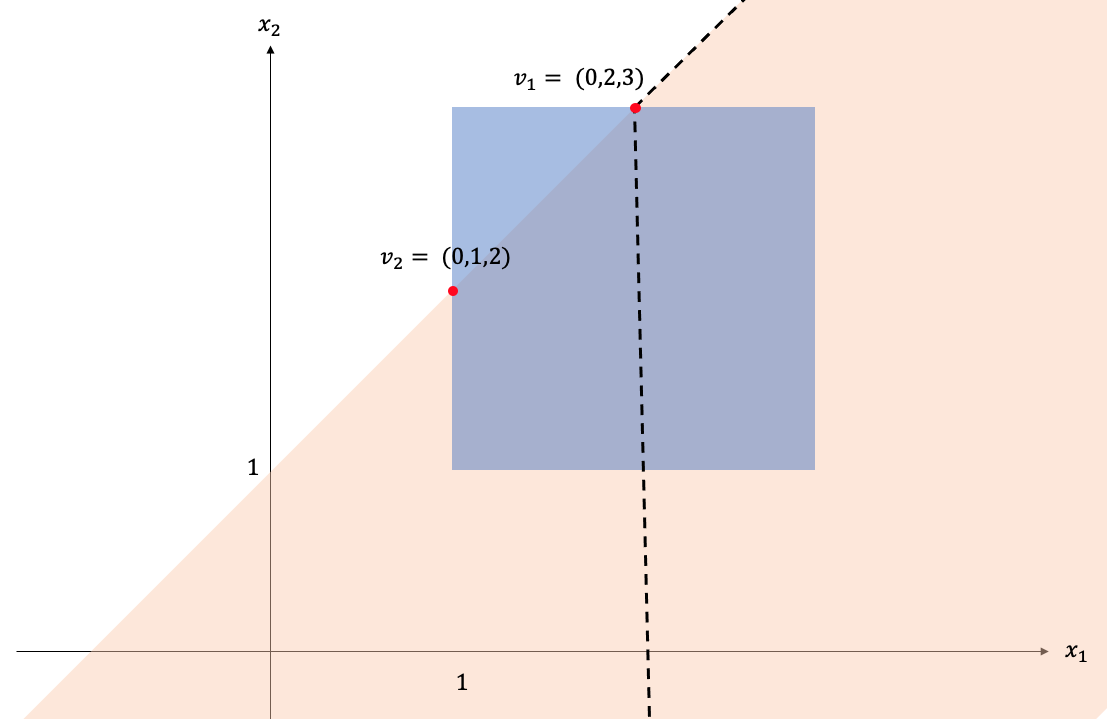}
    \caption{The overview of example \ref{exm: v1v2 in HcdU}. The region in skin color represents the set $\langle c,d\rangle\cap \R^2$, the region in blue represents the set $\mathcal{U}\cap \R^2$, and the region on the right-hand side of the dot line represent the set $\mathscr{H}_{c,d}^\mathcal{U}$. The two red points represent respectively $v_1$ and $v_2$. It is easy to check that $v_2 \notin \mathscr{H}_{c,d}^\mathcal{U}$ and $v_1 \in \mathscr{H}_{c,d}^\mathcal{U}$.} 
    \label{fig: v1v2 in HcdU}
\end{figure}
\end{myexm}

It remains to find the maximal $\rho$ such that $v \oplus  \rho\cdot w \in \mathscr{H}^{\mathcal{U}}_{c,d}$, for $v \in \mathscr{H}^{\mathcal{U}}_{c,d}$ and $w \notin \mathscr{H}^{\mathcal{U}}_{c,d}$. Inspired by the previous method for determining whether a vector $v$ belongs to the tropical cone $ \mathscr{H}_{c,d}^\mathcal{U}$, we remark that the set $\mathscr{H}^\mathcal{U}_{c,d}$ can be rewritten as 
\[\bigcap_{u\in U} \left\{x\in \R^{n+1}\left|(c|x\oplus x_1\cdot u)\le (d|x\oplus x_1\cdot u)
\right\}\right.\]
Let us define the following matrix:
\[
M_u = 
\begin{bmatrix}
    u_1  \oplus 0 &\varepsilon & \varepsilon & \dots & \varepsilon\\
    u_2           & 0          & \varepsilon & \dots & \varepsilon\\
    u_3 & \varepsilon & \ddots & 
    \\
    \vdots            & \vdots & & \ddots\\
    u_{n+1}         &\varepsilon & \dots & \varepsilon & 0
\end{bmatrix}
\]
then one verifies that the inequality $(C_i|x\oplus x_1\cdot u)\le (D_i|x\oplus x_1\cdot u)$ can be rewritten as $C_i^T\otimes M_u \otimes x \le D_i^T \otimes M_u \otimes x$, or simply, $(M_u^T \otimes C_i |x )\le (M_u^T\otimes D_i |x)$, which defines exactly a tropical half-space. Denote $\mathscr{H}_{c,d}^u$ the tropical half-space defined by this equation, then we have \(\mathscr{H}_{c,d}^\mathcal{U} = \bigcap_{u\in U}\mathscr{H}^u_{c,d}\) and we have the following equality $\rho_{v,w}$
\[\max\{\rho| v \oplus \rho\cdot w \in \mathscr{H}_{c,d}^{\mathcal{U}}\} = \min_{u\in U}\left(\max\{\lambda| v \oplus \lambda\cdot w \in \mathscr{H}_{c,d}^u\}\right)\]
It is worth noting that for $u \in U$ such that $w \in \mathscr{H}_{c,d}^u$, we shall have $\max\{\lambda|v\oplus \lambda\cdot w \in \mathscr{H}_{c,d}^u\} = +\infty$. But the right-hand side of the equation is always a finite value since there is at least a $u \in U$ such that $w \notin \mathscr{H}_{c,d}^u$, for such a $u$, we have $(M_u^T\otimes c| w) > (M^T_u\otimes d|w)\ge \varepsilon$, and we can derive that the largest $\lambda$ such that $v \oplus \lambda\cdot w \in \mathscr{H}_{c,d}^u$ is given by 
\(\lambda_u = (M_u^T\otimes d|v)-(M_u^T\otimes c| w) = (d|v\oplus v_1\cdot u) - (c|w \oplus w_1\cdot u)\) and 
$$\rho_{v,w} = \min_{u \in U,\ w \notin \mathscr{H}_{c,d}^u}((d|v\oplus v_1\cdot u) - (c|w \oplus w_1\cdot u))$$
\begin{myexm}
    Consider the case $c = (\varepsilon,\varepsilon,0)$, $d = (\varepsilon,1,\varepsilon)$, $U = \{(0,1,1),(0,1,3),(0,3,1)\}$. $v = (0,3,1)$ and $w = (0,1,2)$. It is easy to verify that $v \in \mathscr{H}_{c,d}^\mathcal{U}$ while $w \notin \mathscr{H}_{c,d}^{\mathcal{U}}$. 

    For $u^{(1)} = (0,1,1),\ u^{(3)} = (0,3,1)$, we have $w \in \mathscr{H}^{u^{(1)}}_{c,d}$ and $w \in \mathscr{H}^{u^{(3)}}_{c,d}$. 

    For $u^{(2)} = (0,1,3)$, we have $(c|w \oplus w_1\cdot u^{(2)}) = 3 > (d|w\oplus w_1\cdot u^{(2)})=2$, thus $w  \notin \mathscr{H}^{u^{(2)}}_{c,d}$, we thus have $\lambda_{u^{(2)}} = (d|v\oplus v_1\cdot u^{(2)}) - (c|w \oplus w_1\cdot u^{(2)})=4-3=1$. We thus have $\rho_{v,w} = 1$. Indeed $v \oplus 1\cdot w = (1,3,3)$ belongs to $\mathscr{H}_{c,d}^\mathcal{U}$ and its on the boundary of $\mathscr{H}_{c,d}^\mathcal{U}$.
\end{myexm}

With these useful tools above, we are able to compute the generating set for $E = \bigcap_{i=1}^N\mathscr{H}_{C_i,D_i}^\mathcal{U}$ with the following steps: 1. We start with the set $V = \{e_i\}_i$, which is a generating set for $\R^n$. 2. For $i$ ranging from $1$ to $N$, update $V$ by the generating set for $\mathrm{Span}(V) \cap \mathscr{H}_{C_i,D_i}$. The resulting set $V$ is then a generating set for $E$.

\begin{remark}[\bf Extremal Points Filtering Procedure]
  In the work of \cite{allamigeon2013computing}, a procedure of extremal points filtering is executed to enhance the performance of the algorithm. For a point $x$ belonging to a tropical cone $\langle C,D \rangle$, one can check wether $x$ is an extremal point by building up the associated directed hypergraph and determine if it has a maximal strongly connected component, for details we invite the reader to refer to \cite{allamigeon2013computing}. Inspired by their approach, we will use the similar method to fasten the computation of extremal points for the tropical cone $E$ defined in section \ref{sec: problem solution}. To reduce the complexity of the algorithm described in the end of section \ref{sec: problem solution} for computation of the generating set $E = \bigcap_{i=1}^N\mathscr{H}_{C_i,D_i}^\mathcal{U}$, we shall insert an extremal points filtering procedure after each iteration: At iteration step $k$, the set $V$ is currently the generating set for the cone $E_k = \bigcap_{i=1}^k\mathscr{H}_i$, we delete points in $V$ that are not extremal in $E_k$ (which still keeps $V$ a generating set), in this way we can reduce the complexity to a large extent. As it has been proved in the previous section that each $\mathscr{H}_{C_i,D_i}^\mathcal{U}$ is a finite intersection of tropical half-spaces $\mathscr{H}_{C_i,D_i}^\mathcal{U}$, $E$ can thus be rewritten as the intersection of finitely many tropical half-spaces and we can apply exactly the same method described in \cite{allamigeon2013computing} to select the extremal points in the generating set of $E$: for $x \in E = \bigcap_j\mathscr{H}_j$, we build the directed hypergraph $\mathcal{G}(x,E)$ associated to $x$ and determine whether $\mathcal{G}(x,E)$ has a maximal strongly connected component (SCC).   
\end{remark}

\section{Case study}
In the following, we use an example to illustrate how we effectuate the backward propagation for the following system $\mathscr{S}$
\[\begin{bmatrix}x^{(k)}_1\\x^{(k)}_2 \end{bmatrix} = \begin{bmatrix}2 & 3\\ 5 & 1\end{bmatrix}\otimes  \begin{bmatrix}x^{(k-1)}_1 \\ x^{(k-1)}_2\end{bmatrix} \oplus \begin{bmatrix} \varepsilon\\ 0\end{bmatrix}\otimes \begin{bmatrix}
    u_k
\end{bmatrix} \oplus \begin{bmatrix}
    w^{(k)}_1 \\ w^{(k)}_2
\end{bmatrix}
\]
where $[w_1^{(k)}, w_2^{(k)}]^T$ ranges in the tropical polyhedra 
$$\mathcal{W} = \left\{\left.\lambda_1\cdot \begin{bmatrix}
    1\\1
\end{bmatrix}\oplus \lambda_2\cdot \begin{bmatrix}
    3\\1
\end{bmatrix} \oplus \lambda_3 \cdot \begin{bmatrix}
    1\\3
\end{bmatrix}\right|\bigoplus_{i=1}^3\lambda_i = 0\right\}$$
and $u$ ranges in the polyhedra $\mathcal{U} = \R$. The safety set is the tropical cone in $\R^2$ defined by $S = \mathrm{Span}([1,0]^T,[0,1]^T)$, or equivalently, in its $\mathcal{M}$-form
\[S = \left\langle 
\begin{bmatrix}
    \varepsilon & \varepsilon & 0 \\
    \varepsilon & -1 &\varepsilon
\end{bmatrix} , 
\begin{bmatrix}
    \varepsilon& 1 & \varepsilon \\
    \varepsilon & \varepsilon & 0
\end{bmatrix}
\right\rangle
\cap \R^2
\]
which we denote as $S = \langle C,D\rangle\cap \R^2$ Given $S$, our goal is to find the backward reach set of $S$ w.r.t the system $\mathscr{S}$. We proceed step by step: we first calculate the set $\phi_\mathcal{W}(S)$, then we calculate $\gamma_\mathcal{U} (\phi_\mathcal{W}(S))$, and finally the backward reach set is given by $A^{-1}(\gamma (\phi_\mathcal{W}(S)))$.

To calculate $\phi_\mathcal{W}(S)$, we follow the steps in described in section \ref{sec: computation phi}. Let $\mathcal{W}'$ be the tropical cone in $\R^3$ generated by $U = \{[0,1,1]^T,\ [0,3,1]^T,\ [0,1,3]^T\}$, then we have $S = S'\cap \R^2$. We first initialize the set $V = \{[0,\varepsilon,\varepsilon]^T,[\varepsilon,0,\varepsilon]^T, [\varepsilon,\varepsilon,0]^T\}$, which is the canonique base for $\R^3$. We then decide the generators for the tropical cone $\mathscr{H}^{C_1,D_1}_\mathcal{W}$, where $C_1 = [\varepsilon,\varepsilon,0],\ D_1 = [\varepsilon,1,\varepsilon]$, which involves calculating the set $V_{in} = V \cap \mathscr{H}^{C_1,D_1}_\mathcal{W} $ and $V_{out} = V \setminus V_{in}$ and combine the vectors of $V_{in}$ and $V_{out}$ to get the generating set for $\mathscr{H}^{C_1,D_1}_\mathcal{W}$. As $[0,\varepsilon,\varepsilon]^T \oplus [0,1,3]^T \notin \langle C_1,D_1\rangle$, we know that $[0,\varepsilon,\varepsilon] \in V_{out}$, and it is easy to verify that $[\varepsilon,0,\varepsilon]^T \in V_{in}$ and $[\varepsilon,\varepsilon,0]^T \in V_{out}$. In conclusion, we have $V_{in} = \{[\varepsilon, 0,\varepsilon]^T\} = \{v_1\}$, $V_{out} = \{[0,\varepsilon,\varepsilon]^T,\ [\varepsilon,\varepsilon,0]^T\} = \{w_1,w_2\}$. The generating set for $\mathscr{H}_\mathcal{W}^{C_1,D_2}$ can be expressed as $\{v_1,\  v_1 \oplus \rho_1\cdot w_1,\ v_1 \oplus \rho_2\cdot w_2\}$, where $\rho_i = \max\{\lambda\mid v_1\oplus \lambda\cdot w_i \in \mathscr{H}_\mathcal{W}^{C_1,D_1}\}$. For $v_1$ and $w_1$, as $v_1^{(1)} = \varepsilon$ and $w_1^{(1)}\neq \varepsilon$, we have $\rho_1 = (D_1|v_1)- (C_1|w_1 \oplus \bigoplus_{u \in U}u) = ([\varepsilon,1,\varepsilon]^T|[\varepsilon,0,\varepsilon]^T) - ([\varepsilon,\varepsilon,0]^T|[0,3,3]^T) = 1-3 = -2$, thus $v_1\oplus \rho_1\cdot w_1 = [-2,0,\varepsilon]^T$. For $v_1$ and $w_2$, as $v_1^{(1)} = \varepsilon$ and $w_2^{(1)} = \varepsilon$, we have $\rho_2 = (D_1|v_1)-(C_1|w_2) = 1-0=1$, thus $v_1\oplus \rho_2 \cdot w_2 = [\varepsilon,0,1]^T$. We find the generating set for $\mathscr{H}_\mathcal{W}^{C_1,D_1}$ is $V_1 = \{[\varepsilon,0,\varepsilon]^T,\ [0,2,\varepsilon]^T,\ [\varepsilon,0,1]^T\}$. Now we aim to calculate the generating set for $ \mathscr{H}_\mathcal{W}^{C_1,D_1}\cap \mathscr{H}_\mathcal{W}^{C_2,D_2}$. Again, we need to decide the set $V_{in}' = V_1 \cap \mathscr{H}_\mathcal{W}^{C_2,D_2}$ and the set $V_{out}' = V_1\setminus V_{in}'$. After calculation, we get $V_{in} = \{[\varepsilon,0,1]^T\}$, which we denote as $V_1 = \{v_1'\}$, and we have $V_{out} = \{[\varepsilon,0,\varepsilon]^T,\ [0,2,\varepsilon]^T\}$, denoted as $V_{out} = \{w_1',w_2'\}$. For $v_1'$ and $w_1'$, we have $\rho_1' = (D_2|v_1') - (C_2|w_1') = ([\varepsilon,\varepsilon,0]^T|[\varepsilon,0,1]^T)-([\varepsilon,-1,\varepsilon]^T|[\varepsilon,0,\varepsilon]^T)=1-(-1)=2$. For $v_1'$ and $w_2'$, we have $\rho_2' = (D_2|v_1') - (C_2|w_2'\oplus \bigoplus_{u\in U}u) = ([\varepsilon,\varepsilon,0]^T|[\varepsilon,0,1]^T) - ([\varepsilon,-1,\varepsilon]^T|[0,2,\varepsilon]^T\oplus [1,3,3]^T) = -1$. A generating set for $ \mathscr{H}_\mathcal{W}^{C_1,D_1}\cap \mathscr{H}_\mathcal{W}^{C_2,D_2}$ is $\{v_1',v_1'\oplus \rho_1'\cdot w_1', v_1'\oplus \rho_2'\cdot w_2'\} = \{[\varepsilon,0,1]^T,[\varepsilon,2,1]^T,[-1,1,1]^T\}$, which we normalize to be $\{[\varepsilon,0,1]^T,[\varepsilon,2,1]^T,[0,2,2]^T\}$. A geometric representation for the example is represented in figure \ref{fig: final figure}. 

By following the procedure described in section \ref{sec: problem solution}, we get finally the backward reachable set
\[\begin{aligned}
    \Upsilon(S) &= A^{-1}\circ \gamma_\mathcal{U} \circ \phi_\mathcal{W}(S)\\
    &=A^{-1}\circ \gamma_\mathcal{U}\left(\mathrm{Span}\left(    \begin{bmatrix}
        \varepsilon \\ 0 \\ 1
    \end{bmatrix},
    \begin{bmatrix}
        \varepsilon \\ 2 \\ 1
    \end{bmatrix},
    \begin{bmatrix}
        0 \\ 2 \\ 2
    \end{bmatrix}
    \right)\cap \R^2\right)\\
    & = A^{-1}\left(\mathrm{Span}\left(\begin{bmatrix}
        \varepsilon \\ 0 \\ \varepsilon
    \end{bmatrix},
    \begin{bmatrix}
        \varepsilon \\ 0 \\ 1
    \end{bmatrix}
    \begin{bmatrix}
        0 \\ 2 \\ 2
    \end{bmatrix}
    \right)\right)\\
    & = \mathrm{Conv}\left(
    \begin{bmatrix}
        0\\ \varepsilon
    \end{bmatrix}
    ,
    \begin{bmatrix}
        -2\\ 1
    \end{bmatrix}
    ,
    \begin{bmatrix}
        1\\ 3
    \end{bmatrix}
    ,
    \begin{bmatrix}
        -2\\ 3
    \end{bmatrix}
    \right)
\end{aligned}
 \]

\begin{figure}
  \centering
  \includegraphics[width=1\linewidth]{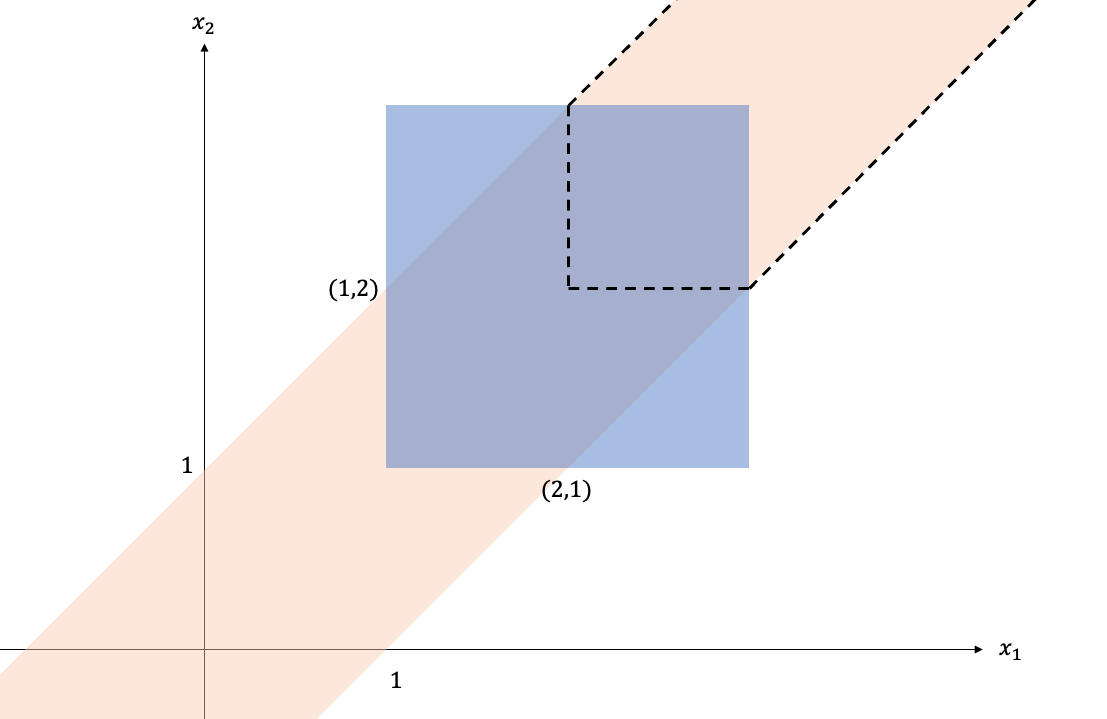}
  \caption{The overview of the three regions: $S$, $\mathcal{W}$ and $\phi_{\mathcal{W}}(S)$. The skin color region represents $S$, $\mathcal{W}$ corresponds to the blue region, and $\phi_{\mathcal{W}}(S)$ is surrounded by the line of dots.}
  \label{fig: final figure}
\end{figure}

\section{Conclusion}
In this paper, we investigated the problem of backward reachability analysis for uncertain max-plus linear systems. Unlike stochastic approaches that describe uncertainty through probability distributions, we considered a deterministic framework in which uncertainty sets are represented by tropical polyhedra. This representation provides a flexible and expressive way to characterize bounded uncertainties within the max-plus algebraic setting. Future work may focus on improving the computational efficiency of the proposed method, possibly at the cost of some loss of precision, and on addressing uncertainties in the system parameters.

\bibliographystyle{ieeetr}
\bibliography{cite.bib}
\addtolength{\textheight}{-12cm}   




\end{document}